\newtheorem{theorem}{Theorem}
\newtheorem{lemma}{Lemma}
\newtheorem{example}{Example}
\newtheorem{proposition}{Proposition}
\newtheorem{assumption}{Assumption}
\newtheorem{definition}{Definition}
\newcommand{\bs}{\boldsymbol}
\Crefname{appsec}{Appendix}{Appendices}
\begin{document}

\title{Classical optimization algorithms for diagonalizing quantum Hamiltonians}

\author{Taehee Ko}
\email{kthmomo@kias.re.kr}
\affiliation{School of Computational Sciences, Korea Institute for Advanced Study}

\author{Sangkook Choi}%
\affiliation{School of Computational Sciences, Korea Institute for Advanced Study}

\author{Hyowon Park}
\affiliation{Department of Physics, University of Illinois at Chicago}

\author{Xiantao Li}
\email{xxl12@psu.edu}
\affiliation{Department of Mathematics, the Pennsylvania State University.}

\begin{abstract}
Diagonalizing a Hamiltonian, which is essential for simulating its long-time dynamics, is a key primitive in quantum computing and has been proven to yield a quantum advantage for several specific families of Hamiltonians. Yet, despite its importance, only a handful of diagonalization algorithms exist, and correspondingly few families of fast-forwardable Hamiltonians have been identified. This paper introduces classical optimization algorithms for Hamiltonian diagonalization by formulating a cost function that penalizes off-diagonal terms and enforces unitarity via an orthogonality constraint, both expressed in the Pauli operator basis. We show that the landscape is benign: every stationary point is a global minimum, and any non-trivial stationary point yields a valid diagonalization, eliminating suboptimal solutions. Under a mild, numerically validated non-convexity assumption, we prove that the proposed optimization algorithm converges. In addition, we derive an a posteriori error bound that converts the optimization error directly into a bound on the Hamiltonian’s diagonalization accuracy.   We pinpoint a class of Hamiltonians that highlights severe drawbacks of existing methods, including exponential per-iteration cost, exponential circuit depth, or convergence to spurious optima. Our approach overcomes these shortcomings, achieving polynomial-time efficiency while provably avoiding suboptimal points. As a result, we broaden the known realm of fast-forwardable systems, showing that quantum-diagonalizable Hamiltonians extend to cases generated by exponentially large Lie algebras. On the practical side, we also present a randomized-coordinate variant that achieves a more efficient per-iteration cost than the deterministic counterpart. We demonstrate the effectiveness of  these algorithms through explicit examples and numerical experiments.
\end{abstract}

\maketitle

\section{Introduction}

With a provable computational advantage for short-time Hamiltonian simulation \cite{lloyd1996universal}, a variety of numerical schemes have been developed, including those based on the Trotter formula \cite{poulin2014trotter, tranter2019ordering, childs2019nearly, childs2021theory}, quantum signal processing (QSP) \cite{low2017optimal}, the quantum singular value transformation (QSVT) \cite{gilyen2019quantum}, and Taylor-series methods \cite{berry2015simulating}. While these schemes achieve optimal or near-optimal complexity for many Hamiltonians, the circuit complexity still grows at least linearly with the simulation time \(T\), as compatible with the no-fast-forwarding theorem \cite{atia2017fast}. Consequently, practical tasks, such as long-time simulations and high-precision energy measurements, remain out of reach.

This limitation has motivated efforts to identify Hamiltonians whose dynamics can be fast-forwarded or simulated with circuit complexity that is sublinear in \(T\) and polynomial in the number of qubits \(n\). Such treatment is often regarded as efficient diagonalization. Notable examples include Hamiltonians with quadratic fermionic operators and commuting local terms \cite{atia2017fast} as well as their extensions involving polynomially sized Lie algebras \cite{gu2021fast, patel2024extension, izmaylov2021define}, certain spin Hamiltonians \cite{kokcu2022fixed}, and Hamiltonians admitting efficient quantum-circuit diagonalizations \cite{novo2021quantum,atia2017fast}.

Together with the identification of efficiently-diagonalizable Hamiltonians, significant efforts have been dedicated to developing diagonalization algorithms. The primary objective of those algorithms is to output a quantum circuit that encodes a unitary corresponding to the resulting decomposition. Notable examples include eigen-decomposition \cite{gluza2024double,cirstoiu2020variational} and Cartan-decomposition \cite{kokcu2022fixed,somma2019unitary,Chu2024Lax,Wierichs2025Recursive}. Importantly, these algorithms can serve as a quantum compiler for more coherent Hamiltonian simulation than direct Trotterization. For instance, Somma \cite{somma2019unitary} introduced a Lie-algebraic diagonalization algorithm, reminiscent of Jacobi’s method, to prepare generalized coherent states.  K\"{o}kcu et al.\ \cite{kokcu2022fixed} proposed an optimization routine to compute a Cartan decomposition and demonstrated long-time Hamiltonian simulations for certain models. An improved, recursive Cartan decomposition method \cite{Wierichs2025Recursive} offers computationally-efficient unitary synthesis. In addition to these classical algorithms, a variational quantum algorithm by Cirstoiu et al.\ \cite{cirstoiu2020variational}, which optimizes a parameterized circuit to approximate an eigen-decomposition, showed Hamiltonian simulations longer than the Trotter method. As anticipated, long-time Hamiltonian simulations can facilitate the investigation of quantum many-body dynamics across a range of models, including time-dependent Hamiltonians, such as excitation dynamics with short-time external fields \cite{Wan2024VariationalCartan} and various spin chain dynamics models \cite{weinberg2017quspin}.

Despite these advances, significant computational challenges still persist in terms of compiling a unitary. For example, the Lie-diagonalization \cite{somma2019unitary, gu2021fast} requires a growing circuit depth to be inversely proportional to the target precision $\epsilon$ and scales linearly with the Lie algebra’s dimension. By contrast, the method in \cite{kokcu2022fixed} achieves fixed-depth circuits scaling at most polynomially for systems of polynomially sized Lie algebras, such as the transverse-field XY (TFTY) model. However, evaluating their cost function may incur exponential overhead due to exponentially many coefficient calculations: For the example of the TFXY model, it leads to \(\mathcal{O}\bigl(2^{n^2}\bigr)\) complexity.  The double-bracket approach in \cite{gluza2024double} in principle applies to more general Hamiltonians, but only a small number of iterations are allowed due to the fact that the circuit depth scales exponentially with the number of iterations. Variational methods \cite{cirstoiu2020variational},  on the other hand, rely on nonconvex optimization, and may converge to local minima or saddle points, and thus might be unreliable for long-time dynamics or large systems.

In this work, we introduce 
a new framework 
for Hamiltonian diagonalization. We first define a cost function, via Pauli strings,  to encode both off-diagonal terms and the orthogonality constraints. We then develop a gradient-descent method with an explicit normalization step to avoid trivial solutions. Notably, the per-iteration cost of our randomized variant scales quadratically with respect to the dimension of the Hilbert space, which is cheaper than the Lie-algebraic approaches \cite{kokcu2022fixed, somma2019unitary} in general. This framework, unlike the variational approaches, avoids suboptimal solutions,  as we prove that any nonzero stationary point of the cost landscape is a global minimum, corresponding to a valid diagonalization. Remarkably, if a proper Lie algebra associated to Hamiltonian is known in advance (as in \cite{kokcu2022fixed}), our formulation reduces to that setting, yet our parameterization remains more efficient. Under a mild nonconvexity condition, we establish sublinear or linear convergence guarantees and demonstrate that approximate cost minimization yields accurate diagonalizations of \(H\). We test both deterministic and randomized optimization algorithms in several numerical experiments to demonstrate the efficiency of our approach. Finally, we identify a family of Hamiltonians that pose substantial challenges for existing algorithms, e.g., incurring exponential per-iteration costs, requiring deep circuits, or converging to suboptimal solutions, while our algorithm maintains polynomial efficiency and converges to a global minimum, given a known unitary decomposition.

The rest of the paper is organized as follows. We first present an optimization formulation by introducing the cost function for Hamiltonian diagonalization. \cref{sec: main result} analyzes the properties of this cost function from an optimization perspective and studies both deterministic and randomized algorithms. In \cref{sec: example}, we show Hamiltonians that incur exponential per-iteration cost under existing Lie-algebraic methods \cite{kokcu2022fixed,somma2019unitary} but only polynomial cost with our algorithm. In \cref{sec: numerical test}, we present numerical results for several representative models.

\section{Method description}

Consider an \(n\)-qubit Hamiltonian of the form
\begin{equation}\label{eq: qubit H}
    H = \sum_{j=1}^{M}c_jQ_j,
\end{equation}
where $c_j\in\mathbb{R}$ and $Q_j$ denotes a Pauli string, i.e.,   a tensor product of $n$ Pauli matrices $Q_j=\otimes_{i=1}^n\sigma_j^{(i)}$  with $\sigma_j^{(i)}\in\{I,X,Y,Z\}$. Many Hamiltonians in chemistry and physics can be mapped to this form, and they satisfy the $M=\mathcal{O}(\text{poly}(n))$ count.

To diagonalize such Hamiltonians, thus enabling more efficient Hamiltonian simulations, we formulate an optimization problem. In what follows, we seek a unitary \(K\) and a diagonal matrix \(D\) such that
\[
    D = K^\dagger H K,
\]
with the obvious, but important implication that 
\(
    e^{-i t H} = K\,e^{-i t D}\,K^\dagger
\)
can be implemented by a circuit whose depth is independent of \(t\).

We seek such diagonalization by optimization. Before defining a cost function, we first recall that an expansion of a \emph{general} Hermitian matrix $A$ in terms of Pauli strings can be divided into diagonal and off-diagonal terms,
\begin{equation}
    A=\sum_{j=1}^{4^n}a_jP_j=\sum_{\forall i\in[n],\; \sigma_{j}^{(i)}\in\{I,Z\}}a_jP_j+ \sum_{\exists i\in[n],\;  \sigma_{j}^{(i)}\in\{X,Y\}}a_jP_j,
\end{equation}where $a_j= \frac{1}{2^n}\tr(AP_j)$. This expansion is derived from the fact that the set of Pauli strings forms an orthogonal basis of the set of matrices of size $2^n\times 2^n$ with respect to the Hilbert-Schmidt (HS) norm.  

Now, we consider a parameterized matrix $A=K^\dagger HK$ with $H$ being the given Hamiltonian and $K$ being parameterized. We define a cost function as follows
\begin{equation}\label{opt}
    f(\bm r,\bm \theta):=\sum_{P\in G_1}\mathrm{tr}\left(K(\bm r,\bm \theta)^\dagger HK(\bm r,\bm \theta)P\right)^2.
\end{equation}Here $G_1$, as a subset of the set of off-diagonal Pauli strings 
\[ G_1:=
\{\otimes_{i=1}^n\sigma^{(i)} :\sigma^{(i)}\in\{I,X,Y,Z\} \text{ and }\exists i\in[n],\;  \sigma^{(i)}\in\{X,Y\}\}
\]
 consisting of Pauli strings constituting the Pauli expansion of $K^\dagger HK$.  In \cref{opt},   we consider a parametric form of $K$, given by,
\begin{equation}  \label{K-param-form}
K(\bm r,\bm \theta) = \sum_{j=1}^{d} r_je^{i\theta_j} P_j, \quad  \bm r = (r_1,..r_d)\in\mathbb{R}^d,\quad \bm \theta = (\theta_1,..,\theta_d)\in\mathbb{R}^d.
\end{equation}
In our ansatz, the parameters \(\{r_j e^{i\theta_j}\}\) are exactly the Pauli‐basis coefficients of \(K\) written in polar decomposition form.  The integer \(d\) is the total number of Pauli strings in the ansatz  (and $2d$  is  the parameter dimension)  and it depends on the Hamiltonian under consideration.  
This definition implies that for any non-zero $\bm r$ such that $f(\bm r,\bm\theta)=0$, we obtain a diagonal matrix $K(\bm r,\bm\theta)^\dagger H K(\bm r,\bm\theta)$. 

Notice that $K$ in \cref{{K-param-form}} may not be necessarily unitary, unless we impose an orthogonality constraint on the cost function \eqref{opt}. A natural constraint is the Hilbert-Schmidt norm error between $K^\dagger K$ and the identity $I$, $\norm{K^\dagger K-I}_F^2$. Toward this end, we write the expansion of $K^\dagger K$ with \eqref{K-param-form} in Pauli strings,
\begin{equation}
    K^\dagger K = \sum_{P=P_iP_j,\;  i,j\in[d]}\phi_{P}(\bm r,\bm\theta) P,
\end{equation}where $\phi_P=\frac{1}{2^n}\tr(K^\dagger K P)$. Together, we define a total cost function with $2d$ parameters,
\begin{equation}\label{opt2}
    F(\bm r,\bm\theta):=f(\bm r,\bm\theta)+\sum_{P\in G_2}\phi_{P}(\bm r,\bm\theta)^2.
\end{equation}Here $G_2=\{P: P=P_iP_j,\;  i,j\in[d]\text{ and }P\neq I\}$. From \eqref{opt}, notice that the cost function \eqref{opt2} is represented by a sum of $\mathcal{O}(d^2M)$ terms in the worst-case scenario. We also note that the first part of this function accounts for the off-diagonal terms and the second part for the orthogonality condition. An immediate observation is that a non-zero global minimum ($\bm r\neq 0$) of the cost function in \eqref{opt2} always exists (e.g. an eigen-decomposition of $H$). In fact, we will show that any stationary point of the cost function \eqref{opt2} is a global minimum, and therefore any non-zero stationary point corresponds to a diagonalization unitary $K$ and an eigen-decomposition of $H$.

\section{Main result}\label{sec: main result}

In this section, we first present several results regarding the properties of the optimization problem \eqref{opt}. Then, we propose a non-convex optimization algorithm to achieve efficient diagonalization. In addition, we provide perturbation results that quantify the robustness of approximate solutions of our optimization problem \cref{opt2}. Our first main result states the main property of the optimization problem \eqref{opt2},
\begin{theorem}
   Any non-zero stationary point of the cost function \eqref{opt2} corresponds to an eigen-decomposition of the Hamiltonian.  
\end{theorem}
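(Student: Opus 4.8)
# Proof Proposal

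\textbf{Overall strategy.} The plan is to characterize stationary points of $F(\bm r,\bm\theta)$ via the first-order optimality conditions and show that any stationary point with $\bm r\neq 0$ necessarily forces $F(\bm r,\bm\theta)=0$, which by construction of the cost function means $K(\bm r,\bm\theta)$ is unitary (the $G_2$-part vanishes) and $K^\dagger H K$ is diagonal (the $G_1$-part vanishes). Since $F\geq 0$ everywhere and a zero of $F$ exists (e.g. the eigen-decomposition of $H$, as noted in the excerpt), every stationary point is then a global minimum, and every non-zero one yields a genuine diagonalizing unitary. So the entire content reduces to: \emph{no non-zero stationary point can have $F>0$}.

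\textbf{Key steps.} First I would compute $\nabla F$ with respect to $\bm\theta$ and $\bm r$ using the explicit parametric form \eqref{K-param-form}. Because $K$ depends on the complex coefficients $z_j = r_j e^{i\theta_j}$ holomorphically (it is linear in the $z_j$), it is cleanest to pass to Wirtinger derivatives $\partial/\partial z_j$, $\partial/\partial\bar z_j$, and note that the real gradient in $(\bm r,\bm\theta)$ vanishes iff the Wirtinger gradient vanishes at every $j$ with $r_j\neq 0$ (the radial/angular pair $(\partial_{r_j},\partial_{\theta_j})$ is an invertible linear combination of $(\partial_{z_j},\partial_{\bar z_j})$ precisely when $r_j\neq 0$). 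The crucial move is then a \emph{scaling / Euler-type identity}: since $f$ is degree-4 homogeneous in $\bm r$ and the $G_2$-sum is also degree-4 homogeneous (each $\phi_P$ is degree-2 in $(\bm r)$ through $K^\dagger K$), we have $\sum_j r_j \,\partial F/\partial r_j = 4F$. At a stationary point all $\partial F/\partial r_j=0$, hence $F=0$. This homogeneity observation is the linchpin and makes the argument short.

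\textbf{Care with the $r_j=0$ coordinates.} The Euler identity argument works even if some $r_j=0$: those terms simply drop out of $\sum_j r_j\partial_{r_j}F$, and the identity $\sum_j r_j \partial_{r_j} F = 4F$ holds identically as a polynomial identity in $\bm r$, independent of which coordinates vanish. So stationarity of $F$ in just the $\bm r$ variables already forces $F=0$; the $\bm\theta$-stationarity is not even needed for this direction. One should double-check the homogeneity bookkeeping: $f(\bm r,\bm\theta)=\sum_{P\in G_1}\tr(K^\dagger H K P)^2$, and $K^\dagger H K$ is bilinear in $(\bar{\bm z},\bm z)$, so each trace is degree $(1,1)$ in $(\bar{\bm z},\bm z)$, hence degree $2$ in $|\bm z|$, i.e. degree $2$ jointly in... more carefully, it is homogeneous of degree $2$ in $\bm r$ when $\bm\theta$ is held fixed, so its square is degree $4$ in $\bm r$; likewise $\phi_P$ is degree $2$ in $\bm r$ and $\phi_P^2$ is degree $4$. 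Thus $F(\lambda\bm r,\bm\theta)=\lambda^4 F(\bm r,\bm\theta)$, giving the Euler relation.

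\textbf{Main obstacle.} The only real subtlety, and the step I expect to require the most care, is the passage between the real $(\bm r,\bm\theta)$ gradient and the conclusion $F=0$: one must confirm that "stationary point" in the theorem means a critical point of $F$ as a function on $\mathbb{R}^{2d}$, and that the chain rule $\partial_{r_j}F = e^{i\theta_j}\partial_{z_j}F + e^{-i\theta_j}\partial_{\bar z_j}F$ (suitably interpreted for the real-valued $F$) indeed yields $\partial_{r_j}F=0$ at such a point — which is immediate since $r_j$ is one of the $2d$ coordinates. Given that, the homogeneity/Euler identity does the rest with essentially no computation. A secondary check is ensuring the non-triviality claim is genuinely about $\bm r\neq 0$ (not about $K\neq 0$ in some weaker sense), and recalling why $f(\bm r,\bm\theta)=0$ together with $\phi_P(\bm r,\bm\theta)=0$ for all $P\in G_2$ exactly encodes "$K$ unitary and $K^\dagger H K$ diagonal" — the former because $\|K^\dagger K-I\|_F^2 = \sum_{P\in G_2}\phi_P^2$ (the $P=I$ component being automatically handled or bounded), the latter by the orthogonality of Pauli strings under the Hilbert–Schmidt inner product as already invoked in the Method section. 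One should note that $F=0$ forces $K^\dagger K = c I$ for a nonnegative constant $c$, and $\bm r\neq 0$ forces $c>0$, so $K/\sqrt{c}$ is the desired unitary.
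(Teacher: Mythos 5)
Your proposal is correct and takes essentially the same route as the paper: the linchpin identity $\sum_j r_j\,\partial F/\partial r_j = 4F$, which you obtain by observing that $F$ is homogeneous of degree $4$ in $\bm r$, is exactly what the paper derives by explicit computation from \eqref{stograd} and \eqref{eq: real1} (and records as \cref{prop: grad-cost}). Both arguments then conclude $F=0$ at any stationary point with $\bm r\neq 0$, whence $K^\dagger K=\|\bm r\|^2 I$ and $K^\dagger H K$ diagonal, giving the eigen-decomposition after rescaling.
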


Next, we outline several properties of the optimization problem that lead to this benign property of the landscape.   

\subsection{Optimization landscape with all stationary points being global minima}

  The Lie-diagonalization algorithm in \cite{kokcu2022fixed} satisfies the property that any stationary point of their cost function yields a Cartan decomposition of $H$. We show that our cost function \eqref{opt2} has a similar property. More specifically, we prove that any non-zero stationary point of the optimization  \eqref{opt2} yields a unitary $K$ and a diagonal $K^\dagger HK$. We refer to a non-zero point as a point $(\bm r,\bm \theta)$ with $\bm r\neq 0$. 
  
  To proceed, we first work with the cost function \eqref{opt},  and show that for any of its non-zero stationary points, we obtain a diagonal matrix $K^\dagger HK$.  As a preliminary, we  derive an exact formula for the partial derivatives of the cost function \eqref{opt} as follows,
\begin{equation}\label{stograd}
\begin{split}
    \frac{\partial f}{\partial r_j}&=4\sum_{P\in G_1} \mathrm{tr}(K(\bm r,\bm\theta)^\dagger H K(\bm r,\bm\theta)P)\mathrm{Re}(\mathrm{tr}(HK(\bm r,\bm\theta) Pe^{-i\theta_j}P_j)),\\
    \frac{\partial f}{\partial \theta_j}&=-4\sum_{P\in G_1} \mathrm{tr}(K(\bm r,\bm\theta)^\dagger H K(\bm r,\bm\theta)P)\mathrm{Im}(\mathrm{tr}(HK(\bm r,\bm\theta)Pr_je^{-i\theta_j}P_j)).
\end{split}
\end{equation}
\begin{theorem}\label{thm: f}
     Any non-zero  stationary point of the cost function $f$ in \eqref{opt}, $(\bm r_c,\bm \theta_c)$ with $\bm r_c\neq 0$, is  a global minimum.
\end{theorem}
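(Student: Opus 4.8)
The cost $f$ is a sum of squares of \emph{real} numbers (for Hermitian $H$ and Hermitian Pauli strings $P$, each $\mathrm{tr}(K^\dagger HKP)$ equals its own conjugate), so $f\ge 0$ everywhere and a point is a global minimizer of $f$ exactly when $f$ vanishes there. Hence the entire task is to show that the stationarity conditions force $f(\bm r_c,\bm\theta_c)=0$.

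The plan is to exploit a scaling symmetry in the radial variables. Fixing $\bm\theta$ and replacing $\bm r$ by $\lambda\bm r$ with $\lambda\in\mathbb R$ multiplies $K(\bm r,\bm\theta)=\sum_j r_j e^{i\theta_j}P_j$ by $\lambda$, hence multiplies $K^\dagger HK$ by $\lambda^2$ and every term $\mathrm{tr}(K^\dagger HKP)^2$ by $\lambda^4$. Thus $f(\cdot,\bm\theta)$ is a homogeneous polynomial of degree $4$ in $\bm r$, and Euler's identity gives $\sum_{j=1}^d r_j\,\partial f/\partial r_j=4f$. I would also present the equivalent direct computation from the explicit gradient \eqref{stograd}: summing $r_j\,\partial f/\partial r_j$ over $j$, the inner factor $\sum_j r_j e^{-i\theta_j}P_j$ collapses to $K^\dagger$ (Pauli strings are Hermitian), so $\sum_j r_j\,\partial f/\partial r_j = 4\sum_{P\in G_1}\mathrm{tr}(K^\dagger HKP)\,\mathrm{Re}\,\mathrm{tr}(K^\dagger HKP)=4\sum_{P\in G_1}\mathrm{tr}(K^\dagger HKP)^2=4f$, again using that the traces are real.

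With this identity in hand the conclusion is immediate: at a stationary point $\partial f/\partial r_j=0$ for every $j$, so $4f(\bm r_c,\bm\theta_c)=\sum_j (\bm r_c)_j\,\partial f/\partial r_j=0$, i.e.\ $f(\bm r_c,\bm\theta_c)=0$, which (by the first paragraph) makes $(\bm r_c,\bm\theta_c)$ a global minimizer. As a byproduct one reads off that $K(\bm r_c,\bm\theta_c)^\dagger HK(\bm r_c,\bm\theta_c)$ has no $G_1$-components and is therefore diagonal. The hypothesis $\bm r_c\neq 0$ plays no role in the vanishing of $f$; it is retained only to separate the point from the trivial $K=0$ and to make the diagonalization statement nonvacuous.

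\textbf{Main obstacle.} For $f$ on its own there is essentially no obstacle --- the argument is homogeneity bookkeeping plus the observation $K^\dagger=\sum_j r_j e^{-i\theta_j}P_j$. The real work is deferred to the corresponding statement for the full cost $F$ of \eqref{opt2}, where the orthogonality penalty $\sum_{P\in G_2}\phi_P(\bm r,\bm\theta)^2$ is \emph{not} homogeneous of a single degree in $\bm r$ (it mixes degree $2$ and degree $4$), so the Euler identity no longer closes the proof by itself; there one must also invoke the $\bm\theta$-stationarity conditions and the multiplication table of the Pauli strings occurring in $K^\dagger K$ to force $K$ to be unitary in addition to $K^\dagger HK$ being diagonal. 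That coupling of the two families of conditions is the step I expect to require genuine care.
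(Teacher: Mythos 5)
Your proof is correct and follows essentially the same route as the paper: the paper also multiplies the stationarity condition for $\partial f/\partial r_j$ by $r_j$, sums over $j$, uses $\sum_j r_j e^{-i\theta_j}P_j = K^\dagger$, and concludes $4f(\bm r_c,\bm\theta_c)=0$; your framing via degree-$4$ homogeneity and Euler's identity is the same computation named. One small correction to your ``main obstacle'' remark: the paper's penalty $\sum_{P\in G_2}\phi_P^2$ excludes $P=I$, so each $\phi_P$ with $P\neq I$ is a degree-$2$ homogeneous polynomial in $\bm r$ and the penalty is purely degree $4$; consequently $F$ does satisfy $F(s\bm r,\bm\theta)=s^4F(\bm r,\bm\theta)$ (the paper states this explicitly) and the same Euler-identity argument closes the stationarity-implies-minimum step for $F$ as well, with the extra work for \cref{thm: F} being only to extract unitarity of $K$ from $\phi_P=0$ for $P\neq I$ via \cref{lem: constraint}.
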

\begin{proof}
    Let $(\bm r_c,\bm\theta_c)$ be a non-zero stationary point, that is $\bm r_c\neq 0$. Denote $K_c=K(\bm r_c,\bm \theta_c)$. By \eqref{stograd}, it follows that at the stationary point,
    \begin{equation}\label{eq: real}     0=\frac{\partial f}{\partial r_j}=4\sum_{P\in G_1}\mathrm{tr}\left(K_c^\dagger HK_c P\right)\mathrm{Re}\left(\mathrm{tr}(HK_c Pe^{-i\theta_j} P_j)\right)=:4\mathrm{Re}\left(\mathrm{tr}\left(HK_cBe^{-i\theta_j}P_j\right)\right),
    \end{equation}
    where we have defined the matrix $B$
    \begin{equation}\label{eq: def of A}
        B:=\sum_{P\in G_1}\mathrm{tr}\left(K_c^\dagger HK_c P\right)P.
    \end{equation} Note that $B$ is Hermitian. By multiplying \eqref{eq: real} by $r_j$, summing over $j$'s, and using the definition of $K$ in \eqref{opt}, we arrive at
    \begin{equation}
        0 = \mathrm{Re}(\tr(K_c^\dagger HK_cB)) = \mathrm{Re}\left(\sum_{P\in 
 G_1}\mathrm{tr}(K_c^\dagger HK_c P)^2\right)=\sum_{P\in G_1}\mathrm{tr}(K_c^\dagger HK_c P)^2=f(\bm r_c,\bm \theta_c),
    \end{equation}by definition \eqref{eq: def of A}. Therefore, this completes the proof.

\end{proof}
According to \cref{thm: f}, any non-zero stationary point $K_c$ of the cost function \eqref{opt} provides us with a diagonal $K_c^\dagger HK_c$. However, this does not mean that $K_c$ is unitary. To eliminate such unphysical stationary points, we impose a constraint on the optimization problem \cref{opt}. As shown earlier in \eqref{opt2}, we introduce an orthogonality constraint as follows, 
\begin{equation}\label{eq: orthoconstraint}
    K^\dagger K = I.
\end{equation} This constraint can be reformulated in terms of parameters $\bm r$ and $\bm \theta$ as follows,
\begin{lemma}\label{lem: constraint}
    
The condition \cref{eq: orthoconstraint} is equivalent to the following constraint,
\begin{equation}
    \phi_P(\bm r,\bm\theta):=\sum_{j=1}^d  c_{j,P}r_jr_{j_P}\exp(i(\theta_j-\theta_{j_P}))=
    \begin{cases}
        1,\; j=j_P,\; c_{j,P}=1\; (P=I) \\ 
        0,\; j\neq j_P\; (P\neq I)
    \end{cases},
\end{equation}where $j_P$ is the index corresponding to $j$ such that $P_{j_P}P_{j}= c_{j,P}P$ with $c_{j,P}\in\{\pm 1,\pm i\}$.
\end{lemma}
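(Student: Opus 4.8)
The plan is to convert the operator equation \eqref{eq: orthoconstraint} into a system of scalar equations by computing the Pauli expansion of $K^\dagger K$ explicitly and then invoking the linear independence of Pauli strings.

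First I would expand $K^\dagger K$ directly from the ansatz \eqref{K-param-form}. Every Pauli string is Hermitian, so $K^\dagger=\sum_{i=1}^d r_ie^{-i\theta_i}P_i$, and hence
\begin{equation*}
  K^\dagger K=\sum_{i,j=1}^d r_ir_j\,e^{i(\theta_j-\theta_i)}\,P_iP_j .
\end{equation*}
For each ordered pair $(i,j)$ the product is again a Pauli string up to a unit phase, $P_iP_j=c_{ij}P$ with $c_{ij}\in\{\pm1,\pm i\}$. Grouping the terms that produce the same Pauli string $P$ yields the (necessarily unique) Pauli expansion $K^\dagger K=\sum_P\phi_P(\bm r,\bm\theta)P$. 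Fixing $P$ and the right index $j$, the requirement $P_iP_j\propto P$ pins down $P_i$ uniquely, namely $P_i\propto PP_j$ (using $P_j^2=I$); when this operator is one of the ansatz strings, call its index $j_P$ and set $c_{j,P}:=c_{j_Pj}$, so that $P_{j_P}P_j=c_{j,P}P$, exactly as in the statement. Reading off the coefficient then gives $\phi_P=\sum_j c_{j,P}\,r_jr_{j_P}\,e^{i(\theta_j-\theta_{j_P})}$, the sum running over those $j$ whose partner $j_P$ exists in the ansatz.

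Next I would isolate the case $P=I$. A product $P_iP_j$ equals a phase times the identity only if $P_j=P_i$, in which case $P_iP_j=I$; assuming without loss of generality that $P_1,\dots,P_d$ are pairwise distinct, this forces $i=j$, i.e.\ $j_I=j$ and $c_{j,I}=1$, so $\phi_I=\sum_j r_j^2$, while for $P\neq I$ the previous formula applies unchanged. Finally, since the Pauli strings form a basis of the space of $2^n\times2^n$ matrices and the identity has Pauli expansion with coefficient $1$ on $I$ and $0$ on every other string, the operator identity $K^\dagger K=I$ is equivalent to $\phi_I=1$ together with $\phi_P=0$ for every Pauli string $P\neq I$ — equivalently, for every $P\in G_2$, since $\phi_P\equiv0$ for strings outside $G_2\cup\{I\}$. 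Substituting the explicit expressions for the $\phi_P$ gives exactly the claimed system.

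The only point that requires care is the bookkeeping in the regrouping step: one must check that, with $P$ fixed, each contributing ordered pair $(i,j)=(j_P,j)$ is counted exactly once and that the phase $c_{j,P}$ is attached consistently (for instance, that $j\mapsto j_P$ is a well-defined involution on the indices that do pair up, with $c_{j_P,P}=\overline{c_{j,P}}$), while pairs whose product leaves the span of the ansatz contribute nothing. This is routine, but it is precisely where a stray sign or a mislabeled index would slip in.
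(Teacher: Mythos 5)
Your proposal is correct and follows essentially the same route as the paper: expand $K^\dagger K$ directly from the ansatz, collect coefficients of each Pauli string $P$ to read off $\phi_P$, and invoke the linear independence of Pauli strings to convert the operator identity $K^\dagger K=I$ into the scalar system $\phi_I=1$, $\phi_P=0$ for $P\neq I$. The only difference is that you spell out some bookkeeping the paper leaves implicit — the pairwise-distinctness of the $P_j$ forcing $i=j$ in the $P=I$ case, the well-definedness and involutive nature of $j\mapsto j_P$, and the phase conjugation $c_{j_P,P}=\overline{c_{j,P}}$ — which is harmless extra rigor rather than a different argument.
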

\begin{proof}
    By the definition of $K$ in \eqref{opt}, we  consider the Pauli basis expansion of $K^\dagger K$ and check that
    \begin{equation}
        I=K^\dagger K = \left(\sum_{j=1}^d r_je^{-i\theta_j} P_j\right)\left(\sum_{j=1}^d r_je^{i\theta_j} P_j\right)=\sum_{P=P_iP_j,\;  i,j\in[d]}\phi_{P}(\bm r,\bm\theta) P,
    \end{equation}where $\phi_P$ is defined in the statement.  On the right hand side of this equation, the only remaining coefficient corresponds to the identity matrix, and the others are zeros, which recovers the constraint in the statement. 
\end{proof}
With this constraint, we improve \cref{thm: f} to guarantee that the resulting optimal $K_c$ is unitary and $K_c^\dagger HK_c$ is diagonal.
\begin{theorem}\label{thm: F}
     Any non-zero  stationary point of the total cost function in \eqref{opt2}, $(\bm r_c,\bm \theta_c)$ with $\bm r_c\neq 0$, is a global minimum. In addition, the matrix $K(\bm r_c,\bm\theta_c)$ is unitary up to a scaling factor, and consequently, 
     \begin{equation}
         H = \frac{1}{\norm{\bm r}^4}K(\bm r_c,\bm\theta_c)h(\bm r_c,\bm\theta_c)K(\bm r_c,\bm\theta_c)^\dagger,
     \end{equation}thereby yielding a KHK decomposition of $H$.
\end{theorem}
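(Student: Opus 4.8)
\emph{Proof proposal.} The plan is to push the argument behind \cref{thm: f} through the entire cost by exploiting a homogeneity property. The key observation is that $F$ in \eqref{opt2} is a homogeneous polynomial of degree $4$ in the radial variables $\bm r$, with $\bm\theta$ held fixed. Indeed, in the ansatz \eqref{K-param-form} the substitution $\bm r\mapsto\lambda\bm r$ sends $K\mapsto\lambda K$, hence $K^\dagger HK\mapsto\lambda^2 K^\dagger HK$ and $\phi_P\mapsto\lambda^2\phi_P$ for every Pauli string $P$; therefore each summand of $f$ in \eqref{opt} and each $\phi_P^2$ with $P\in G_2$ scales by $\lambda^4$, so $F(\lambda\bm r,\bm\theta)=\lambda^4 F(\bm r,\bm\theta)$ for all $\lambda\in\mathbb{R}$. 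Differentiating this identity in $\lambda$ at $\lambda=1$ gives Euler's relation $\sum_{j=1}^d r_j\,\partial F/\partial r_j=4F$. At a stationary point $(\bm r_c,\bm\theta_c)$ every partial derivative of $F$ vanishes, in particular the $r_j$-derivatives, so $F(\bm r_c,\bm\theta_c)=0$; since $F\ge0$ everywhere (a sum of squares of real quantities) and the value $0$ is attained, e.g.\ by an eigendecomposition of $H$ representable in the ansatz, $(\bm r_c,\bm\theta_c)$ is a global minimum. This is essentially the manipulation in the proof of \cref{thm: f}, now applied to the full cost; one could equivalently mimic that proof line by line, multiplying each equation $\partial F/\partial r_j=0$ by $r_j$ and summing over $j$.

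Next I would unpack the two structural consequences of $F(\bm r_c,\bm\theta_c)=0$. Writing $K_c:=K(\bm r_c,\bm\theta_c)$, the vanishing of the sum of squares forces $f(\bm r_c,\bm\theta_c)=0$ and $\phi_P(\bm r_c,\bm\theta_c)=0$ for every $P\in G_2$. From $f=0$, every off-diagonal Pauli coefficient $\tfrac1{2^n}\tr(K_c^\dagger HK_c P)$ with $P\in G_1$ is zero; since $G_1$ was defined to contain \emph{all} off-diagonal Pauli strings appearing in the expansion of $K^\dagger HK$, the matrix $K_c^\dagger HK_c$ is diagonal, exactly as in \cref{thm: f}. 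From $\phi_P=0$ for all $P\neq I$, the expansion $K_c^\dagger K_c=\sum_{P=P_iP_j}\phi_P P$ used in \cref{lem: constraint} collapses to $K_c^\dagger K_c=\phi_I(\bm r_c,\bm\theta_c)\,I$.

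Finally I would evaluate the remaining scalar and assemble the decomposition. By orthogonality of Pauli strings, $\tr(P_iP_j)=2^n\delta_{ij}$, so $\tr(K_c^\dagger K_c)=\sum_{i,j}r_{c,i}r_{c,j}e^{i(\theta_{c,j}-\theta_{c,i})}\tr(P_iP_j)=2^n\norm{\bm r_c}^2$, whence $\phi_I(\bm r_c,\bm\theta_c)=\norm{\bm r_c}^2$ and $K_c^\dagger K_c=\norm{\bm r_c}^2 I$. Since $\bm r_c\neq0$ we have $\norm{\bm r_c}>0$, so $U:=K_c/\norm{\bm r_c}$ is unitary; being a square matrix, $K_c$ then also satisfies $K_cK_c^\dagger=\norm{\bm r_c}^2 I$. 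Substituting, $K_c\,\bigl(K_c^\dagger HK_c\bigr)\,K_c^\dagger=(K_cK_c^\dagger)\,H\,(K_cK_c^\dagger)=\norm{\bm r_c}^4\,H$, which rearranges to $H=\norm{\bm r_c}^{-4}K_c\,h(\bm r_c,\bm\theta_c)\,K_c^\dagger$ with $h(\bm r_c,\bm\theta_c)=K_c^\dagger HK_c$ diagonal, i.e.\ the asserted KHK decomposition. I do not anticipate a genuine obstacle here: once the degree-$4$ homogeneity is spotted, the rest is bookkeeping. The two points needing care are that $G_1$ must truly exhaust the off-diagonal support of $K^\dagger HK$ (so that $f=0$ certifies diagonality) and that the orthogonality block of $F$ pins $K_c$ down only up to the positive scalar $\norm{\bm r_c}$ (there being no $\phi_I^2$ term in $F$, as $G_2$ excludes $I$), which is precisely why the prefactor $\norm{\bm r_c}^{-4}$, rather than $1$, appears in the final identity.
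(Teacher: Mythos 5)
Your proposal is correct and reaches the same conclusion by the same essential mechanism as the paper: both hinge on the identity $\sum_j r_j\,\partial F/\partial r_j=4F$, which you derive conceptually from the degree-$4$ homogeneity of $F$ in $\bm r$ via Euler's relation, whereas the paper obtains it by explicitly multiplying the computed $\partial F/\partial r_j$ by $r_j$ and summing (the paper in fact records this identity afterwards as \cref{prop: grad-cost} and notes the self-similar scaling $F(s\bm r,\bm\theta)=s^4F(\bm r,\bm\theta)$ just below it). Your unpacking of $F=0$ — diagonality of $K_c^\dagger H K_c$ from $f=0$, $K_c^\dagger K_c=\phi_I I$ from the vanishing of all $\phi_P$ with $P\neq I$, the trace computation $\phi_I=\norm{\bm r_c}^2$, and the final rearrangement — matches the paper step for step, with the small added (correct) observation that $K_cK_c^\dagger=\norm{\bm r_c}^2 I$ also follows since $K_c$ is square.
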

The proof of this theorem is shown in \cref{sec: Appendix A}. From the proof, we immediately see a property of the cost function \eqref{opt2}.
\begin{proposition}\label{prop: grad-cost}
The cost function in \eqref{opt2} satisfies that
    \begin{equation}
        \sum_jr_j\frac{\partial F}{\partial r_j} = 4F(\bm r,\bm \theta).
    \end{equation}
\end{proposition}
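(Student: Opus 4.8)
The plan is to recognize that this identity is nothing but Euler's relation for functions that are homogeneous of degree $4$ in the radial variables $\bm r$, so the entire proof reduces to a scaling observation plus one application of Euler's theorem.

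First I would check the homogeneity of the two pieces of $F$ separately, with $\bm\theta$ held fixed. For the off-diagonal part, note that the map $\bm r\mapsto K(\bm r,\bm\theta)$ in \cref{K-param-form} is linear, i.e. $K(\lambda\bm r,\bm\theta)=\lambda K(\bm r,\bm\theta)$; hence $K^\dagger HK$ is homogeneous of degree $2$ in $\bm r$, each scalar $\mathrm{tr}(K^\dagger HKP)$ occurring in \cref{opt} is homogeneous of degree $2$, and squaring and summing over the finite set $G_1$ shows that $f(\bm r,\bm\theta)$ is homogeneous of degree $4$ in $\bm r$. For the constraint part, \cref{lem: constraint} gives $\phi_P(\lambda\bm r,\bm\theta)=\sum_j c_{j,P}(\lambda r_j)(\lambda r_{j_P})\exp(i(\theta_j-\theta_{j_P}))=\lambda^2\phi_P(\bm r,\bm\theta)$, so each $\phi_P^2$, and therefore the finite sum $\sum_{P\in G_2}\phi_P^2$, is homogeneous of degree $4$ in $\bm r$. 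Adding the two, $F(\lambda\bm r,\bm\theta)=\lambda^4 F(\bm r,\bm\theta)$ for all $\lambda$.

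Then I would conclude by differentiating this scaling identity in $\lambda$ and setting $\lambda=1$ (Euler's theorem on homogeneous functions), which yields exactly $\sum_j r_j\,\partial F/\partial r_j=4F(\bm r,\bm\theta)$. Alternatively, and at the same level of effort, one can bypass Euler's theorem and use the explicit gradient formula \cref{stograd} for $\partial f/\partial r_j$ together with the elementary derivative of $\phi_P^2$: multiplying by $r_j$ and summing over $j$, the partial resummations collapse via $\sum_j r_j e^{-i\theta_j}P_j=K(\bm r,\bm\theta)^\dagger$ and $\sum_j c_{j,P}r_jr_{j_P}\exp(i(\theta_j-\theta_{j_P}))=\phi_P$, reproducing the $f$-part exactly as already done inside the proof of \cref{thm: f} and giving the $\phi_P^2$-part analogously. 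There is essentially no obstacle here; the only point warranting a word of care is the legitimacy of differentiating termwise under the finite sums over $G_1$ and $G_2$, which is immediate because every summand is a polynomial in $\bm r$ (and smooth in $\bm\theta$).
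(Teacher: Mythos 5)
Your proof is correct, and your primary route is genuinely different from the paper's. The paper proves this proposition by pointing at the explicit computation in \cref{eq: real1} and \cref{eq: F=0}: it writes out $\partial F/\partial r_j$ using the gradient formula \cref{stograd} together with the derivative of $\phi_P^2$, multiplies by $r_j$, sums over $j$, and observes that the sums collapse back to $f+\sum_{P\in G_2}\phi_P^2 = F$, with an overall factor of $4$. You instead notice that since $K(\bm r,\bm\theta)$ is linear in $\bm r$ for fixed $\bm\theta$, both $f$ and $\sum_{P\in G_2}\phi_P^2$ are homogeneous of degree $4$ in $\bm r$, and then invoke Euler's identity for homogeneous functions. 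This is cleaner and more structural: it makes the coefficient $4$ transparent as the homogeneity degree, and it dispenses with the gradient calculus entirely. Interestingly, the paper already records the fact $F(s\bm r,\bm\theta)=s^4F(\bm r,\bm\theta)$ in the sentence immediately following the proposition (to justify restricting to the unit sphere), but does not use it to prove the proposition itself; your argument shows the proposition is an immediate corollary of that scaling property. Your fallback route via the explicit gradient and the resummation $\sum_j r_je^{-i\theta_j}P_j=K^\dagger$ matches the paper's computation, so you have both proofs in hand. No gaps.
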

\begin{proof}
From \eqref{eq: real1} and \eqref{eq: F=0}, we see that the variables $\bm \theta$ do not affect the scaling of $\bm r$ and so the proof is straightforward.
\end{proof}

From \cref{thm: F},  minimizing the cost function in \cref{opt2} produces a unitary $K$ up to scaling and a KHK decomposition of the Hamiltonian $H$. However, we should avoid the trivial solution (e.g. $\bm r=0$). Namely, the desired optimization task is formulated as,
\begin{equation}
    \min_{\bm r\neq 0}F(\bm r,\bm\theta).
\end{equation} A simple observation is the self-similar property of the total cost function:  $F(s\bm r,\bm\theta)=s^4F(\bm r,\bm\theta)$ for any $s\in\mathbb{R}$. By this property and the fact that $\min_{\bm r\neq 0} F(\bm r,\bm\theta)=0$, the optimization problem can be equivalently reformulated as a constrained optimization on the unit sphere, while removing zero stationary points,
\begin{equation}\label{opt3}
    \min_{\bm r\neq 0}F(\bm r,\bm\theta) = \min_{\norm{\bm r}=1}F(\bm r,\bm\theta).
\end{equation}

\subsection{Optimization Algorithm based on the gradient descent}

In this section, we discuss optimization algorithms for solving the problem \eqref{opt3}, and present its convergence guarantee under a mild non-convex condition. Specifically, the algorithm updates the parameters $(\bm r,\bm \theta)$ at each iteration: At each iteration, the algorithm applies gradient descent (GD), followed by a normalization of the parameter $\bm r$ (in order to avoid the trivial solution $\bm r=0$). The simple algorithm is outlined in \cref{alg: algorithm1}. For convenience, we introduce the following notation in \cref{alg: algorithm1},
\begin{equation}
    \bm x_t:=(\bm r_t,\bm \theta_t), \quad \bm r(\bm x_t) = \bm r_t, \quad  \bm\theta (\bm x_t) = \bm \theta_t.
\end{equation}
\begin{algorithm}
\SetAlgoLined
	\KwData{initial guess $\bm x_0=(\bm  r_0, \bm\theta_0)$ with $\norm{\bm r_0}=1$,  learing rate schedule  $a_t\in(0,1)$, the maximal iteration number $T$, period $p$}
	\KwResult{approximate unitary $K(\bm r_T,\bm \theta_T)$ for a KHK decomposition of $H$}

	\For{$t=0:T$}{

             Construct $\nabla F(\bm x_t)$ from \eqref{opt2}\; 
            
            $\bs y_{t} = \bs x_{t}-a_{t}\nabla F(\bm x_t)$\;
            
            $\bm x_{t+1}=\left(\frac{\bm r(\bm y_t)}{\norm{\bm r(\bm y_t)}},\bm \theta(\bm y_t)\right)$\;
              }
	\caption{Deterministic optimization algorithm}
    \label{alg: algorithm1}
\end{algorithm}

On the other hand, unlike the conventional GD method,  \cref{alg: algorithm1} involves the step of normalization, and its convergence analysis should take into account the effect of normalization. We found through our analysis a similar convergence result to that of non-convex optimization using GD under a similar non-convex condition. Specifically, we use a weaker version of the Polyak--\L{}ojasiewicz (PL) condition that has been widely used in analyzing machine learning algorithms. A related, but more general, condition is the KL condition \cite[Definition 2.3]{li2018calculus}, which is a local condition and holds for any continuously-differentiable lower-bounded function. We slightly modify this condition such that the condition holds uniformly near global minima, which is stronger than the original condition. Nevertheless, we have validated the proposed condition with extensive numerical results. With this experience, we propose a non-convex condition as follows,
\begin{assumption}[{\bf Uniform KL condition}]\label{eq: kl condition}
    A function $F(\bm x)$ satisfies a uniform KL condition if there exist constants $\delta_f>0$, $\mu>0$ and $\alpha\in[0,2)$ such that  
    \begin{equation}
        \norm{\nabla F(\bm x)}^2\geq 4\mu F(\bm x)^{\alpha},
    \end{equation}whenever $F(\bm x)<\delta_f$. 
\end{assumption}
Under this assumption, we show that \cref{alg: algorithm1} converges to a non-trivial solution of the cost function in \cref{opt3} sublinearly or linearly with respect to $\epsilon$, depending on the exponent $\alpha$ determined by the given Hamiltonian and the initial guess.
\begin{theorem}\label{eq: convergence thm}
    Under \cref{eq: kl condition}, for any precision $\epsilon>0$ and a good initialization $\bm x_0$ such that $F(\bm x_0)<\delta_f$, \cref{alg: algorithm1} converges to a solution of the optimization problem \eqref{opt3} to precision $\epsilon$ sublinearly (if $\alpha\in(1,2)$) or linearly (if $\alpha\in[0,1]$) with the number of iterations at most,
    \begin{equation}
    T=\mathcal{O}\left(\frac{1}{ \epsilon^{\max\{\alpha-1,0\}}}\log \frac{1}{\epsilon}\right).
\end{equation}
\end{theorem}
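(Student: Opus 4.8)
The plan is to track the decrease of $F$ along the trajectory of \cref{alg: algorithm1}, handling the normalization step with the self-similarity $F(s\bm r,\bm\theta)=s^4F(\bm r,\bm\theta)$, and then convert the per-iteration decrease into an iteration count via the uniform KL inequality $\|\nabla F(\bm x)\|^2\ge 4\mu F(\bm x)^\alpha$. First I would establish a standard descent estimate: assuming $\nabla F$ is $L$-Lipschitz on the relevant compact region (the unit sphere in $\bm r$ times a bounded $\bm\theta$-domain, which one can justify because $F$ is a polynomial in $\bm r$ and in $e^{\pm i\theta_j}$), the gradient step $\bm y_t=\bm x_t-a_t\nabla F(\bm x_t)$ gives $F(\bm y_t)\le F(\bm x_t)-a_t(1-\tfrac{L a_t}{2})\|\nabla F(\bm x_t)\|^2$, so with $a_t\le 1/L$ we get $F(\bm y_t)\le F(\bm x_t)-\tfrac{a_t}{2}\|\nabla F(\bm x_t)\|^2$.

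Next I would argue the normalization does not hurt. Writing $\bm x_{t+1}=(\bm r(\bm y_t)/\|\bm r(\bm y_t)\|,\bm\theta(\bm y_t))$, self-similarity gives $F(\bm x_{t+1})=F(\bm y_t)/\|\bm r(\bm y_t)\|^4$. Since $\|\bm r(\bm x_t)\|=1$ and the $\bm r$-component of the step is $-a_t\partial_{\bm r}F$, one has $\|\bm r(\bm y_t)\|^2=1-2a_t\langle\bm r_t,\partial_{\bm r}F\rangle+a_t^2\|\partial_{\bm r}F\|^2$; invoking \cref{prop: grad-cost}, $\langle\bm r_t,\partial_{\bm r}F\rangle=4F(\bm x_t)\ge0$, so $\|\bm r(\bm y_t)\|^2\le 1+a_t^2\|\partial_{\bm r}F\|^2$, hence $1/\|\bm r(\bm y_t)\|^4\le 1$ up to a controllable $O(a_t^2)$ correction — more carefully, for small enough $a_t$ the dominant balance $\|\bm r(\bm y_t)\|^2\approx 1-8a_tF(\bm x_t)$ actually makes $F(\bm x_{t+1})$ no larger than $F(\bm y_t)$ (division by something $\le 1$), so the descent inequality is preserved: $F(\bm x_{t+1})\le F(\bm x_t)-c\,a_t\|\nabla F(\bm x_t)\|^2$ for a constant $c\in(0,1)$. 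In particular $F(\bm x_t)$ is monotonically nonincreasing, so $F(\bm x_t)<\delta_f$ for all $t$ once $F(\bm x_0)<\delta_f$, and the KL inequality remains applicable along the whole trajectory.

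Combining the descent step with the KL bound yields $F(\bm x_{t+1})\le F(\bm x_t)-4c\mu a_t F(\bm x_t)^\alpha$. The iteration-complexity bookkeeping is then the classical argument for first-order methods under a PL/KL-type inequality with exponent $\alpha$: for $\alpha\in[0,1]$ the recursion $F_{t+1}\le F_t-\kappa F_t^\alpha$ with $F_t\le 1$ gives $F_{t+1}\le(1-\kappa)F_t$ (linear rate), so $T=O(\log(1/\epsilon))$ iterations suffice; for $\alpha\in(1,2)$ one compares the recursion with the ODE $\dot F=-\kappa F^\alpha$, whose solution decays like $t^{-1/(\alpha-1)}$, giving $T=O(\epsilon^{-(\alpha-1)})$; a short induction (or the discrete Grönwall-type lemma for sequences satisfying $u_{t+1}\le u_t-\kappa u_t^\alpha$) makes this rigorous, and including the logarithmic factor to absorb the transient yields the stated $T=O\!\big(\epsilon^{-\max\{\alpha-1,0\}}\log(1/\epsilon)\big)$. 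Finally, since $F(\bm x_T)\le\epsilon$ and every non-zero stationary point is a global minimum corresponding to a valid KHK decomposition (\cref{thm: F}), the output $K(\bm r_T,\bm\theta_T)$ — which has $\|\bm r_T\|=1$, hence is non-trivial — is an $\epsilon$-approximate solution of \eqref{opt3}.

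The main obstacle I anticipate is the normalization step: one must show cleanly that dividing by $\|\bm r(\bm y_t)\|^4$ cannot undo the gradient decrease. The delicate point is that $\|\bm r(\bm y_t)\|$ could in principle exceed $1$ (through the $a_t^2\|\partial_{\bm r}F\|^2$ term), which would amplify $F$; the resolution is to choose the learning rate small enough (e.g. $a_t\lesssim 1/L$ together with a bound on $\|\partial_{\bm r}F\|$ on the compact domain) so that the first-order term $-8a_tF(\bm x_t)$ from \cref{prop: grad-cost} dominates, guaranteeing $\|\bm r(\bm y_t)\|\le 1$ whenever $F(\bm x_t)$ is bounded below away from zero, and handling the regime where $F(\bm x_t)$ is already tiny separately (there the $O(a_t^2)$ inflation is itself negligible relative to $\epsilon$). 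A secondary technical point is justifying the global Lipschitz constant $L$ and a uniform lower bound on admissible step sizes, which follows from compactness of the constraint set and smoothness of the polynomial $F$.
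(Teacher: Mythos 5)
The structure you set up (descent lemma for the gradient step, then account for the normalization via the self-similarity $F(s\bm r,\bm\theta)=s^4F(\bm r,\bm\theta)$, then iterate the KL bound) is exactly the paper's strategy. But the way you handle the normalization contains a sign error that breaks the argument at the one genuinely delicate step.

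Since $F(\bm x_{t+1})=F(\bm y_t)/\|\bm r(\bm y_t)\|^4$, normalization \emph{amplifies} $F$ precisely when $\|\bm r(\bm y_t)\|<1$, and \emph{shrinks} it when $\|\bm r(\bm y_t)\|>1$. You write that ``the dominant balance $\|\bm r(\bm y_t)\|^2\approx 1-8a_tF(\bm x_t)$ actually makes $F(\bm x_{t+1})$ no larger than $F(\bm y_t)$ (division by something $\le 1$),'' which is backwards: dividing by a number $\le 1$ makes the quotient larger, not smaller. By \cref{prop: grad-cost}, $\bm r_t\cdot\nabla_r F(\bm x_t)=4F(\bm x_t)\ge 0$, so the generic behavior is $\|\bm r(\bm y_t)\|^2\approx 1-8a F(\bm x_t)<1$, i.e.\ the normalization generically fights the descent step. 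Consequently you cannot conclude $F(\bm x_{t+1})\le F(\bm x_t)-c\,a_t\|\nabla F(\bm x_t)\|^2$ by treating normalization as harmless; you must show the gradient decrease dominates the normalization inflation. Also note that the bound you actually derive, $\|\bm r(\bm y_t)\|^2\le 1+a_t^2\|\partial_{\bm r}F\|^2$, is an \emph{upper} bound and hence yields a \emph{lower} bound on $1/\|\bm r(\bm y_t)\|^4$, which is the wrong direction entirely.

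What the paper does instead is derive the \emph{lower} bound $\|\bm r(\bm y_t)\|^2\ge 1-8aF(\bm x_t)$ (using Cauchy--Schwarz on \cref{prop: grad-cost}), so that $1/\|\bm r(\bm y_t)\|^4\le 1/(1-8aF(\bm x_t))^2\le 1/(1-16aF(\bm x_t))$, an \emph{upper} bound on the inflation factor. Combining this multiplicative inflation $\approx 1+16aF(\bm x_t)$ with the multiplicative decrease $1-2\mu a F(\bm x_t)^{\alpha-1}$ from the KL bound gives
\[
F(\bm x_{t+1})\le\Bigl(1-\bigl(2\mu-16F(\bm x_t)^{2-\alpha}\bigr)aF(\bm x_t)^{\alpha-1}\Bigr)F(\bm x_t),
\]
and here is where the hypothesis $\alpha<2$ and the ``good initialization'' $F(\bm x_0)$ small enough (specifically $F(\bm x_t)\le(\mu/16)^{1/(2-\alpha)}$) enter: they guarantee $2\mu-16F^{2-\alpha}\ge\mu>0$, so the decrease strictly wins each iteration. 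Without the lower bound on $\|\bm r(\bm y_t)\|$ and this balancing of $F$ against $F^{\alpha-1}$, the proof does not close. Your iteration-complexity bookkeeping at the end is fine (indeed, for $\alpha\in(1,2)$ your $O(\epsilon^{-(\alpha-1)})$ without the log is actually tighter than the paper's crude bound), but it rests on the descent inequality you have not established.
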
The proof of this theorem is shown in \cref{sec: Appendix A}.

\subsection{Sensitivity analysis for the approximate solution}

 Once a numerical solution is obtained from \cref{alg: algorithm1} with high accuracy, we show that a small optimization error implies a good diagonalization of $H$.  
 We provide such a posterior error bound as in the following lemma,
\begin{lemma}[A posterior error bound]\label{lem: perturb}
    For $(\bm r,\bm\theta)$ obtained from \cref{alg: algorithm1}, there exist a diagonal matrix $h_0$ and a off-diagonal matrix $\Delta\in\mathbb{C}^{2^n\times 2^n}$ such that 
    \begin{equation}\label{h0Delta}
        K(\bm r,\bm \theta)^\dagger H K(\bm r,\bm \theta) = h_0 + \Delta,
    \end{equation}with $\norm{\Delta}_F<\sqrt{\frac{F(\bm r,\bm \theta)}{2^n}}$. Furthermore, if $\norm{\bm r}=1$ {and the orthogonality constraint of $F(\bm r, \bm \theta)$ in \eqref{opt2} is less than $\frac{\epsilon}{2^n}$ with $\epsilon\leq \frac{1}{4}$, then, the following bound holds,
    \begin{equation}\label{eq: H-H}
        \norm{H-\widetilde{H}}_2\leq \frac{F(\bm r,\bm\theta)}{2^{n-1}}+6(1+\sqrt{F(\bm r,\bm \theta)})\norm{H}_F\sqrt{\epsilon},\quad \widetilde{H}:=K(\bm r, \bm\theta)h_0K(\bm r, \bm\theta)^\dagger.
    \end{equation}}
\end{lemma}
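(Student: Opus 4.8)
The plan is to split \cref{lem: perturb} into its two assertions and treat them in turn. For the first claim, I would simply \emph{define} $h_0$ to be the diagonal part of $K^\dagger H K$ in the Pauli basis (the sum over diagonal Pauli strings in $G_1^c$) and $\Delta$ to be the remaining off-diagonal part. By the definition of the cost function \eqref{opt}, $f(\bm r,\bm\theta)=\sum_{P\in G_1}\tr(K^\dagger H K P)^2$, and since the Pauli strings are orthogonal with $\|P\|_F^2 = 2^n$, the Hilbert--Schmidt norm of $\Delta$ satisfies $\|\Delta\|_F^2 = \frac{1}{2^n}\sum_{P\in G_1}\tr(K^\dagger H K P)^2 = \frac{f(\bm r,\bm\theta)}{2^n} \le \frac{F(\bm r,\bm\theta)}{2^n}$, giving $\|\Delta\|_F \le \sqrt{F(\bm r,\bm\theta)/2^n}$ (strict inequality follows from $F$ containing a nonnegative orthogonality contribution, or just keeping $\le$ with a harmless relabeling). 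This part is routine.

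For the second claim, the idea is that when $\|\bm r\|=1$ and the orthogonality penalty is small, $K=K(\bm r,\bm\theta)$ is \emph{nearly} unitary, so conjugation by $K$ nearly preserves spectra. First I would quantify near-unitarity: the orthogonality part of $F$ is $\sum_{P\in G_2}\phi_P^2$, and by \cref{lem: constraint} (or the same orthogonality-of-Pauli-strings argument) this equals $\frac{1}{2^n}\|K^\dagger K - \Pi\|_F^2$ where $\Pi$ is the component of $K^\dagger K$ along the identity; combined with $\|\bm r\|=1$ forcing the identity-coefficient of $K^\dagger K$ to be $1$, one gets $\|K^\dagger K - I\|_F^2 \le \epsilon$, i.e. $\|K^\dagger K - I\|_2 \le \sqrt{\epsilon}$. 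Then $\widetilde H := K h_0 K^\dagger = K(K^\dagger H K - \Delta)K^\dagger = K K^\dagger H K K^\dagger - K\Delta K^\dagger$. Writing $E := K^\dagger K - I$ with $\|E\|_2\le\sqrt\epsilon\le 1/2$, I would bound $\|KK^\dagger H KK^\dagger - H\|_2$ by a telescoping/triangle-inequality argument: $KK^\dagger H KK^\dagger - H = (KK^\dagger - I)H KK^\dagger + H(KK^\dagger - I)$, and control $\|KK^\dagger - I\|_2$ and $\|KK^\dagger\|_2$. Here the mild subtlety is that $\|K^\dagger K - I\|_2 \le \sqrt\epsilon$ controls the singular values of $K$, hence also $\|KK^\dagger - I\|_2 \le \sqrt\epsilon$ and $\|K\|_2^2 = \|KK^\dagger\|_2 \le 1+\sqrt\epsilon \le 1 + 1/2$, so $\|KK^\dagger H KK^\dagger - H\|_2 \le \sqrt\epsilon\,(\|H\|_2 + \|H\|_2(1+\sqrt\epsilon)) \le 3\sqrt\epsilon\,\|H\|_2 \le 3\sqrt\epsilon\,\|H\|_F$, using $\epsilon \le 1/4$. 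For the remaining term, $\|K\Delta K^\dagger\|_2 \le \|K\|_2^2\|\Delta\|_2 \le (1+\sqrt\epsilon)\|\Delta\|_F \le \tfrac{3}{2}\sqrt{F(\bm r,\bm\theta)/2^n}$; absorbing the $2^{-n/2}$ and recombining constants, one arrives at a bound of the stated shape $\frac{F}{2^{n-1}} + 6(1+\sqrt{F})\|H\|_F\sqrt\epsilon$. I would not worry about getting the constant $6$ tight — the stated bound is already generous, so tracking inequalities loosely and rounding up suffices.

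The step I expect to be the main obstacle is the precise translation between "the orthogonality penalty in $F$ is $\le \epsilon/2^n$" together with "$\|\bm r\| = 1$" and the clean operator-norm statement "$\|K^\dagger K - I\|_2 \le \sqrt\epsilon$." One has to check carefully that, after restricting to $\|\bm r\|=1$, the identity-component of $K^\dagger K$ is exactly $1$ (this uses $\phi_I = \sum_j r_j^2 = \|\bm r\|^2$ from the $P=I$ case of \cref{lem: constraint}), so that $\|K^\dagger K - I\|_F^2 = 2^n\sum_{P\in G_2}\phi_P^2 \le \epsilon$, and then pass from Frobenius to operator norm — which is the easy direction. A secondary bookkeeping point is making sure the factor $(1+\sqrt{F(\bm r,\bm\theta)})$ genuinely dominates the $\|K\|_2^2$-type factors $(1+\sqrt\epsilon)$ after one uses $\|\Delta\|_F \le \sqrt{F/2^n}$; since $\epsilon \le 1/4$ one can crudely replace every $(1+\sqrt\epsilon)$ by $3/2$ and still fit under the coefficient $6$. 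Everything else is triangle inequality and submultiplicativity of the operator norm.
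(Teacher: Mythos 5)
Your first-part argument coincides with the paper's: define $h_0,\Delta$ by the diagonal/off-diagonal Pauli splitting of $K^\dagger HK$, then $\norm{\Delta}_F^2 = f/2^n \le F/2^n$ by orthogonality of Pauli strings. That is exactly what the paper does.

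For the second part you take a genuinely different decomposition route. The paper conjugates the difference by $K$, writing
\begin{equation}
H-\widetilde H = (K^\dagger)^{-1}\bigl(K^\dagger(H-\widetilde H)K\bigr)K^{-1}, \qquad K^\dagger(H-\widetilde H)K = h_0+\Delta - (K^\dagger K)h_0(K^\dagger K),
\end{equation}
and then invokes invertibility of $K$ (with $\norm{K^{-1}}_2^2\le (1-\sqrt\epsilon)^{-1}\le 2$, which is where the leading factor $2$ in the final bound comes from). You instead avoid $K^{-1}$ entirely, writing $\widetilde H = KK^\dagger H\,KK^\dagger - K\Delta K^\dagger$, and bound the two pieces using $\norm{KK^\dagger - I}_2\le\sqrt\epsilon$ and $\norm{K}_2^2\le 1+\sqrt\epsilon$. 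Both arguments are sound, and your near-unitarity translation step (via $\phi_I=\norm{\bm r}^2=1$ from \cref{lem: constraint}, then $\norm{K^\dagger K-I}_F^2 = 2^n\sum_{P\in G_2}\phi_P^2\le\epsilon$) is exactly the right place to be careful and you did it correctly. Your route is, if anything, somewhat more elementary since it does not need the invertibility of $K$ or the operator-norm bound on $K^{-1}$.

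One genuine issue, which is not merely a loose constant: your bound for the $\Delta$ contribution is $\norm{K\Delta K^\dagger}_2\lesssim\sqrt{F/2^n}$, whereas the stated bound has $F/2^{n-1}$ in that slot. For small $F$ these have different scaling and $\sqrt{F/2^n}$ is the larger one, so your argument as written does not actually land under \cref{eq: H-H}; it lands under a bound of the form $c_1\sqrt{F/2^n}+c_2(1+\sqrt F)\norm{H}_F\sqrt\epsilon$. Worth noting, however, that the paper's own displayed chain appears to make the same elision — it bounds the $\norm{\Delta}_F$ term by $F/2^n$ in the penultimate inequality, even though the first half of the lemma only establishes $\norm{\Delta}_F\le\sqrt{F/2^n}$ — so the $F/2^{n-1}$ in the lemma statement does not obviously follow from either proof. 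The defensible conclusion both arguments support is the square-root version; if the stated bound is retained you would need an additional observation (e.g.\ that $F\ge 2^n$ on the iterates, or a rescaling) to justify $\sqrt{F/2^n}\le F/2^{n-1}$, and neither you nor the paper supplies one.
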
The proof of this lemma is shown in \cref{sec: Appendix B}.
Roughly, this lemma guarantees a good approximate eigen-decomposition of $H$ for an optimized numerical solution of \eqref{opt2}. This result can further support the following perturbation relation between the eigenspaces of $H$ and $\widetilde{H}$,
\begin{equation}
    \norm{P_{\lambda_k(H)}-P_{k,\widetilde{H}}}_F\ll 1.
\end{equation}Here $P_{\lambda_k(H)}$ denotes the projector onto the $k$-th eigenspace of $H$ and $P_{k,\widetilde{H}}$ is defined to be the projector of the eigenspace corresponding to the eigenvalues of $\widetilde{H}$ closest to $\lambda_k$, satisfying that $\text{rank}(P_{k,\widetilde{H}}) =\text{rank}(P_{\lambda_k(H)})$. For each $k$, these projectors can be defined easily by listing the eigenvalues of $H$ and $\widetilde{H}$ in increasing order.  This can be useful for Hamiltonian simulations in low-energy subspace \cite{csahinouglu2021hamiltonian}. The following theorem shows that the perturbation relations hold uniformly for all $k$.    
\begin{theorem}\label{thm: perturb}
    For any $\epsilon'<\frac{\min_{\lambda_i\neq \lambda_j}\abs{\lambda_i-\lambda_j}}{4\max_k\text{rank}(P_{\lambda_k(H)})}$, we assume  a numerical solution $(\bm r,\bm \theta)$ of the cost function \eqref{opt2} as stated in \cref{lem: perturb} satisfying that
    \begin{equation}
            \frac{F(\bm r,\bm\theta)}{2^{n-1}}+6(1+\sqrt{F(\bm r,\bm \theta)})\norm{H}_F\sqrt{\epsilon}\leq \epsilon'.    
    \end{equation}Then the $k$-th eigenspace of $H$ is approximated by the set of eigenvectors of $\widetilde{H}$, whose corresponding eigenvalues are closest to $\lambda_k$, namely,  
    \begin{equation}
     \norm{P_{k,\widetilde{H}}-P_{\lambda_k(H)}}_F\leq C\left(\frac{F(\bm r,\bm\theta)}{2^{n-1}}+6(1+\sqrt{F(\bm r,\bm \theta)})\norm{H}_F\sqrt{\epsilon}\right),
 \end{equation}where the constant $C>0$ depends only on the spectrum of $H$.

\end{theorem}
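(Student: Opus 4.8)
The plan is to combine the a posteriori bound of \cref{lem: perturb} with a standard spectral perturbation argument (Davis--Kahan / Weyl) to control the eigenprojectors. First I would invoke \cref{lem: perturb}: under the stated hypotheses on $(\bm r,\bm\theta)$ we have $\norm{H-\widetilde H}_2 \le \eta$ where $\eta := \frac{F(\bm r,\bm\theta)}{2^{n-1}}+6(1+\sqrt{F(\bm r,\bm\theta)})\norm{H}_F\sqrt{\epsilon}$, and the hypothesis of the theorem is precisely $\eta \le \epsilon'$ with $\epsilon' < \frac{\gamma}{4 R}$, where $\gamma := \min_{\lambda_i\ne\lambda_j}\abs{\lambda_i-\lambda_j}$ is the minimal spectral gap and $R := \max_k \operatorname{rank}(P_{\lambda_k(H)})$. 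Note $\widetilde H$ is Hermitian (it is $K h_0 K^\dagger$ with $K$ unitary and $h_0$ diagonal real... one should check $h_0$ is real, which follows since $K^\dagger H K$ is Hermitian and $h_0$ is its diagonal part), so both matrices have real spectra and an orthonormal eigenbasis.

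Next I would set up the matching between eigenvalues. Label the distinct eigenvalues of $H$ as $\lambda_1 < \lambda_2 < \cdots$ with multiplicities $m_k = \operatorname{rank}(P_{\lambda_k(H)})$, and list the eigenvalues of $\widetilde H$ in increasing order. By Weyl's inequality, each eigenvalue of $\widetilde H$ lies within $\eta$ of an eigenvalue of $H$; since $2\eta < \gamma/2 < \gamma$, the ``closest to $\lambda_k$'' clustering is unambiguous, and exactly $m_k$ eigenvalues of $\widetilde H$ (counted with multiplicity) fall into the interval $I_k := (\lambda_k - \gamma/2, \lambda_k + \gamma/2)$. Define $P_{k,\widetilde H}$ as the spectral projector of $\widetilde H$ onto this cluster; this matches the definition in the theorem and has $\operatorname{rank}(P_{k,\widetilde H}) = m_k$. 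The key separation estimate is that the cluster of $\widetilde H$ near $\lambda_k$ is separated from the rest of $\operatorname{spec}(\widetilde H)$ by at least $\gamma - 2\eta \ge \gamma/2 > 0$ (since $\eta \le \epsilon' < \gamma/(4R) \le \gamma/4$).

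Then I would apply the Davis--Kahan $\sin\Theta$ theorem in its resolvent/Frobenius form: for the projectors onto a spectral cluster separated by a gap $\delta$ from the complement of the perturbed spectrum,
\begin{equation}
  \norm{P_{k,\widetilde H} - P_{\lambda_k(H)}}_F \le \frac{\norm{(\widetilde H - H)P_{\lambda_k(H)}}_F}{\delta} \le \frac{\sqrt{m_k}\,\norm{\widetilde H - H}_2}{\delta} \le \frac{2\sqrt{R}}{\gamma}\,\eta,
\end{equation}
using $\delta \ge \gamma/2$ and $\operatorname{rank}(P_{\lambda_k(H)}) = m_k \le R$. Taking $C := 2\sqrt{R}/\gamma$, which depends only on the spectrum of $H$ (its gap and maximal multiplicity), and substituting $\eta$ back gives exactly the claimed bound, uniformly in $k$. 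Alternatively, one can get the same conclusion via a contour-integral representation of the projectors, $P = \frac{1}{2\pi i}\oint (zI - A)^{-1}\,dz$ over a circle enclosing $I_k$ and nothing else, and bounding $\norm{P_{k,\widetilde H} - P_{\lambda_k(H)}}_F$ by the length of the contour times $\norm{H}$ resolvent differences; I would use whichever is cleaner.

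The main obstacle I anticipate is \emph{book-keeping the clustering consistently} rather than any deep inequality: one must verify that the hypothesis $\epsilon' < \gamma/(4R)$ is exactly what guarantees (i) no two distinct clusters of $\widetilde H$ overlap or merge, (ii) the rank-matching $\operatorname{rank}(P_{k,\widetilde H}) = \operatorname{rank}(P_{\lambda_k(H)})$ holds, and (iii) the separation gap feeding Davis--Kahan is bounded below by a constant multiple of $\gamma$. In particular the factor $R$ in the denominator of the hypothesis is presumably there to absorb the $\sqrt{R}$ (or $m_k$) that appears when passing from the operator-norm perturbation to the Frobenius-norm projector bound, so I would keep careful track of which norm is used at each step and make sure the final constant $C$ is genuinely $H$-dependent only. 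A secondary, minor point is confirming Hermiticity/realness of $h_0$ and $\widetilde H$ so that the selfadjoint perturbation theory applies verbatim.
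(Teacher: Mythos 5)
Your proof is correct in substance but takes a genuinely different route from the paper's. The paper first converts the operator-norm bound from \cref{lem: perturb} into an overlap estimate $\langle\widetilde{\psi}_j|P_{\lambda_k(H)}|\widetilde{\psi}_j\rangle \geq 1-(C+1)\epsilon'$ for each eigenvector $\widetilde{\psi}_j$ of $\widetilde{H}$ in the cluster, by appealing to a lemma from \cite{ko2024quantum} whose proof hinges on a Stone--Weierstrass polynomial approximation of the spectral indicator function; it then sums this over $j$, uses the rank-matching $\mathrm{rank}(P_{k,\widetilde{H}})=\mathrm{rank}(P_{\lambda_k(H)})$, and expands $\norm{P_{k,\widetilde{H}}-P_{\lambda_k(H)}}_F^2 = 2\,\mathrm{rank}(P_{\lambda_k(H)}) - 2\,\mathrm{tr}(P_{k,\widetilde{H}}P_{\lambda_k(H)})$. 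You instead invoke Weyl's inequality to justify the clustering and rank-matching, and then apply the Davis--Kahan $\sin\Theta$ theorem with the separation gap $\gamma/2$ to bound the Frobenius-norm difference directly. Both routes establish the theorem, but they buy different things: the paper's argument is self-consistently tied to its companion reference and produces a constant through functional calculus, whereas yours is textbook-standard, fully self-contained, and produces an explicit constant $C=\mathcal{O}(\sqrt{R}/\gamma)$ (with $R=\max_k\mathrm{rank}(P_{\lambda_k(H)})$, $\gamma$ the minimal spectral gap). Notably, the paper's proof in fact yields a bound of order $\sqrt{\epsilon'}$ on $\norm{P_{k,\widetilde{H}}-P_{\lambda_k(H)}}_F$ (since the inequality is on the squared Frobenius norm), while your Davis--Kahan derivation gives a bound linear in $\epsilon'$, which actually matches what the theorem statement claims and is sharper for small residuals. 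Two small points to tidy: the usual Davis--Kahan form gives $\norm{\sin\Theta}_F\le \norm{(\widetilde{H}-H)P_{\lambda_k(H)}}_F/\delta$, and one should include the factor $\sqrt{2}$ when passing from $\norm{\sin\Theta}_F$ to $\norm{P_{k,\widetilde{H}}-P_{\lambda_k(H)}}_F$ in the equal-rank case; and, as you flagged, it is worth stating explicitly that $h_0$ is Hermitian (it is the diagonal part of the Hermitian matrix $K^\dagger H K$, hence real diagonal) so that $\widetilde{H}$ is Hermitian and self-adjoint perturbation theory applies.
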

The proof is shown in \cref{sec: Appendix B}.

\subsection{A randomized optimization algorithm}
The cost of evaluating the gradient of the cost function \eqref{opt2} in \cref{alg: algorithm1} scales as $\mathcal{O}(d^4M^2)$. To be specific, the cardinality of set $G_1$ scales as $\mathcal{O}(d^2M)$, since $K^\dagger HK$ in \eqref{opt} consists of $\mathcal{O}(d^2M)$ Paulis. Then, to evaluate a partial derivative in \eqref{stograd}, we execute a for-loop with $\mathcal{O}(d^2M)$ operations. Each operation computes the summand in \eqref{stograd} for a given $P\in G_1$, and the quantity $\tr(K^\dagger HKP)$ equals the coefficient of $K^\dagger HK$ for the Pauli string $P$, which requires $\mathcal{O}(dM)$ cost by looking for a pair of Pauli strings $(P_1,P_2)$ such that for each $Q_j$ in \eqref{eq: qubit H}, $P_1Q_jP_2$ equals $P$ up to a scaling factor. Since computing a partial derivative requires $\mathcal{O}(d^3M^2)$ and there are $\mathcal{O}(d)$ partial derivatives, in total, we obtain the gradient of the term \eqref{opt} in \eqref{opt2} with $\mathcal{O}(d^4M^2)$ operations. Similarly, we can deduce that $\mathcal{O}(d^3)$ operations are required to compute the gradient of the term for orthogonality constraint in \eqref{opt2} by using the formulations in \cref{lem: constraint} and \eqref{eq: real1}. To summarize, the per-iteration cost of \cref{alg: algorithm1} scales as $\mathcal{O}(d^4M^2)$. To circumvent this quartic scaling of per-iteration cost, we propose a random coordinate version of \cref{alg: algorithm1}. This random coordinate optimization quadratically reduces the per-iteration cost from $\mathcal{O}(d^4M^2)$ to  $\mathcal{O}(d^2M)$, which is a significant improvement. To this end, we derive recursive relations for evaluating the cost function and its gradient. The algorithm is quite similar to \cref{alg: algorithm1} except that a stochastic gradient estimate is introduced.
\begin{algorithm}
\SetAlgoLined
	\KwData{initial guess $\bm x_0=(\bm  r_0, \bm\theta_0)$ with $\norm{\bm r_0}=1$,  learing rate schedule  $a_t\in(0,1)$, the maximal iteration number $T$, period $p$}
	\KwResult{approximate unitary $K(\bm r_T,\bm \theta_T)$ for a KHK decomposition of $H$}

	\For{$t=0:T$}{

             Sample indices $\{i_t\}\subset [2^n]$ \; 

            Construct $\bm g_t=\sum \frac{\partial F(\bm x_t)}{\partial i_t} \bm e_{i_t} $
            
            $\bs y_{t} = \bs x_{t}-a_{t}\bm g_t$\;
            
            $\bm x_{t+1}=\left(\frac{\bm r(\bm y_t)}{\norm{\bm r(\bm y_t)}},\bm \theta(\bm y_t)\right)$\;
              }
	\caption{Random coordinate optimization algorithm}
    \label{alg: algorithm2}
\end{algorithm}Here we describe how to evaluate $\bm g_t$ and the cost function in this algorithm.

Intuitively, if in each iteration, only a small fraction of the components in the parameters $(\bm r,\bm \theta)$ are updated, one expects that the change in the cost function \eqref{opt2} and its partial derivatives would be small. Roughly, in the first part of the cost function \eqref{opt2}, we see that for sparse vectors $\delta \bm r, \delta \bm\theta$, 
\begin{equation}\label{eq: 1st part rcd}
    \tr(K(\bm r+\delta\bm r,\bm \theta + \delta\bm\theta)^\dagger HK(\bm r+\delta\bm r,\bm \theta + \delta\bm\theta)P) = \tr(K(\bm r,\bm \theta)^\dagger HK(\bm r,\bm \theta)P) + \Delta, 
\end{equation}where
\begin{equation}\label{eq: Delta}
\begin{split}
    \Delta &= 2\tr(K(\delta\bm r, \bm\theta + \delta\bm\theta)^\dagger H K(\bm r + \delta\bm r, \bm\theta + \delta\bm\theta)P) + \tr(K(\delta\bm r, \bm\theta + \delta\bm\theta)^\dagger H K(\delta\bm r, \bm\theta + \delta\bm\theta)P)\\
    & +  2\tr(K(\bm r, \delta\bm\theta)^\dagger H K(\bm r , \bm\theta + \delta\bm\theta)P) + \tr(K(\bm r, \delta\bm\theta)^\dagger H K(\bm r, \delta\bm\theta)P).
\end{split}
\end{equation}Notice that the four terms in the $\Delta$ involve at least one of the sparse vectors as input. If we say that $\delta \bm r$ and $\delta \bm \theta$ are $S$-sparse vectors, then the cost of evaluating the four terms scales as $\mathcal{O}(S)$. The following specific formulations validate this argument.  
\begin{equation}
    K(\delta\bm r, \bm\theta + \delta\bm\theta)^\dagger H K(\bm r + \delta\bm r, \bm\theta + \delta\bm\theta) = \left(\sum_{s=1}^S \delta r_{j_s}e^{i(\theta_{j_s}+\delta \theta_{j_s})}P_{j_s}\right)^\dagger H\left(\sum_{j=1}^d (r_j+ \delta r_j)e^{i(\theta_j+ \delta \theta_j)}P_j\right) 
\end{equation}
\begin{equation}
    K(\delta \bm r, \bm\theta + \delta \bm\theta)^\dagger H K(\delta \bm r, \bm\theta + \delta \bm\theta) = \left(\sum_{s=1}^S \delta r_{j_s}e^{i(\theta_{j_s}+ \delta \theta_{j_s})}P_{j_s}\right)^\dagger H \left(\sum_{s=1}^S  \delta r_{j_s}e^{i(\theta_{j_s}+ \delta \theta_{j_s})}P_{j_s}\right)
\end{equation}
\begin{equation}
    K(\bm r, \delta \bm\theta)^\dagger H K(\bm r , \bm\theta + \delta \bm\theta) = \left(\sum_{s=1}^S r_{j_s}(e^{i(\theta_{j_s}+ \delta \theta_{j_s})}-e^{i\theta_{j_s}})P_{j_s}\right)^\dagger H\left(\sum_{s=1}^S r_{j_s}e^{i(\theta_{j_s}+ \delta \theta_{j_s})}P_{j_s}+\sum_{j\not\in\{j_s\}_{s=1}^S}r_je^{i\theta_j}P_j\right)
\end{equation}
\begin{equation}
   K(\bm r, \delta \bm\theta)^\dagger H K(\bm r, \delta \bm\theta) = \left(\sum_{s=1}^S r_{j_s}(e^{i(\theta_{j_s}+\delta  \theta_{j_s})}-e^{i\theta_{j_s}})P_{j_s}\right)^\dagger H\left(\sum_{s=1}^S r_{j_s}(e^{i(\theta_{j_s}+\delta \theta_{j_s})}-e^{i\theta_{j_s}})P_{j_s}\right)
\end{equation}

Notice that in these formulas, there is at least one sum of $S$ terms multiplied left or right to the Hamiltonian $H$. When these formulations are computed with the trace with the Pauli basis element $P$ as in \eqref{eq: 1st part rcd},  only terms involving $P$ are counted. That is, we iterate over the sum with $S$ terms to find such terms, which takes only $S$ steps with a for-loop, thereby yielding $\mathcal{O}(S)$ computational cost. Together with these formulas, the relation \eqref{eq: 1st part rcd} provides us with efficient recursive calculations for the first part of the cost function \eqref{opt2} and its gradient, as roughly summarized as
\begin{equation}\label{eq: delta difference}
    f(\bm r_{t+1}, \bm \theta_{t+1}) = f(\bm r_t, \bm \theta_{t}) + \widetilde{\Delta}_t.
\end{equation} 
Here $\widetilde{\Delta}_t$ denotes the difference between $f(\bm r_{t+1}, \bm \theta_{t+1})$ and $f(\bm r_{t}, \bm \theta_{t})$, and is represented by the errors of the form $\Delta$ in \eqref{eq: Delta} and the definition \eqref{opt}, thereby depending on iteration step $t$. Estimating $\widetilde{\Delta}_t$ requires $\mathcal{O}(d^2MS)$ computational cost, recalling that the first part of the cost function \eqref{opt2} consists of $\mathcal{O}(d^2M)$ terms. Similarly, one can derive a recursive relation for the second part of the cost function \eqref{opt2} that involves the function $\phi_P(\bm r,\bm\theta)$. The cost of evaluating the gradient of the cost function \eqref{opt2} is then $\mathcal{O}(d^2MS^2)$, since we construct a $S$-sparse stochastic gradient, each partial derivative is computed using \eqref{eq: 1st part rcd}, which takes $\mathcal{O}(S)$, and it is repeated $\mathcal{O}(d^2M)$ times as the summation in \eqref{stograd} consists of $\mathcal{O}(d^2M)$ terms. This cost dominates that of computing the gradient of the second part of the cost function \eqref{opt2}. For more details, we refer the reader to our code, where recursive formulations are explicitly written.

With this randomization, the convergence analysis becomes non-trivial. Our algorithm performs the normalization on the radial parameters in $\bm r_t$ at each iteration. If the full gradient is calculated as in \cref{alg: algorithm1},  a lower bound of the updated parameters, $\bm r_{t+1}$, can be derived analytically using \cref{prop: grad-cost} as in the proof of \cref{eq: convergence thm}. However, in the case of randomized optimization, we cannot use \cref{prop: grad-cost}, and thus deriving a lower bound of the norm of $\bm r_{t+1}$ and a convergence guarantee is still an open issue. Nevertheless, with good initializations, the random coordinate algorithm \cref{alg: algorithm2} performs well and outputs a good approximation of a diagonalization of the Hamiltonian, as numerically shown in \cref{sec: numerical test}.

\section{A new family of efficiently diagonalizable Hamiltonians  }\label{sec: example}

Typical Hamiltonians in quantum computing applications are represented by a linear combination of polynomially many Pauli strings. In the form  \eqref{eq: qubit H} with $M=\mathcal{O}(\text{poly}(n))$, several families of Hamiltonians are found to be fast-forwarding or simulable on a quantum computer with polynomial efforts, such as polynomially sized Lie algebras \cite{gu2021fast, patel2024extension,kokcu2022fixed}, mutually commuting Hamiltonians \cite{atia2017fast}, and quantum diagonalizable Hamiltonians \cite{novo2021quantum}.

In this section, we identify families of Hamiltonians, and prove that their associated Lie algebras are exponentially large, but eigen-decompositions are polynomially small. To proceed, we follow the notion of quantum diagonalizable Hamiltonians \cite{cirstoiu2020variational,atia2017fast}, 
\begin{definition}[{\bf Quantum Diagonalizable Hamiltonian}]\label{def: QD}
A Hamiltonian is said to be quantum diagonalizable if
    \begin{equation}
        H = UDU^\dagger,
    \end{equation}where a unitary $U$ and a diagonal $D$ can be represented by polynomially sized circuits, respectively.
\end{definition}Within this family of Hamiltonians, we define a subset as follows,
\begin{definition}\label{def: QD1}
We consider a subset of the quantum diagonalizable Hamiltonians as follows,
    \begin{equation}
        H = UDU^\dagger,
    \end{equation}where a unitary $U$ and a diagonal $D$ can be represented by linear combinations of polynomially many Pauli strings, respectively.
\end{definition}
Within the set of Hamiltonians in \cref{def: QD1}, we show the existence of Hamiltonians whose Lie algebras are exponentially large. Specifically, the dimension of Lie algebra $\mathfrak{g}(H)$ is $4^n-1$, but the unitary of eigenvectors of the Hamiltonian $H$, $U$, is represented by $\mathcal{O}(\text{poly}(n))$ Pauli strings. The key idea is that the construction of the unitary of eigenvectors, $U$, involves $2n-1$ anticommuting Pauli strings, which is the maximum possible number for $n$-qubit systems \cite{bonet2020nearly}.  
\begin{example}\label{ex: Hams}
Consider a family of Hamiltonians defined as,
\begin{equation}
\begin{split}
    &H = UDU^\dagger,\\ 
    &U = \left(\prod_{m=1}^{\lfloor\log n\rfloor}U_m\right)\exp(i\theta Z_2)\left(c_0X_1Y_2+c_1Z_1Y_2+c_2Z_2+\sum_{j=3}^nc_jX_2Y_3...Z_j\right),\\
    &D = I+\sum_{j=1}^nd_jY_j,
\end{split}
\end{equation}
where all parameters are non-zero, especially $\theta\neq k\pi$ for integer $k$,  $c_j$'s are real and $\sum_{j=0}^nc_j^2=1$. Here, each $U_m$ can be a Pauli rotation or the basic Clifford gate among $S$, $H$, and $CNOT$ and $U_{\lfloor\log n\rfloor}\neq \exp(-i\theta Z_2)$.     
\end{example}
For any Hamiltonian in \cref{ex: Hams}, we show that $\mathfrak{g}(H)=\text{su}(2^n)$, which yields that $\text{dim}(\mathfrak{g}(H))=4^n-1$. The proof is shown in \cref{sec: Appendix C}.

Using \cref{ex: Hams}, one can also conceive similar examples with logarithmic repetitions as follows, which results in Hamiltonians whose Lie algebras are exponentially large,  
\begin{example}\label{ex: Hams1}
Appending unitaries of the form in \cref{ex: Hams},
    \begin{equation}
        U = \prod_{r= 1}^{\lfloor \log n \rfloor}U_r,
    \end{equation}reproduces Hamiltonians whose Lie algebras are exponentially large but polynomially small in Pauli strings,
    \begin{equation}
        H = UDU^\dagger,
    \end{equation}where $D$ is defined in \cref{ex: Hams}.
\end{example}
 Here we exploited the property of anticommuting Pauli strings \cite{izmaylov2019unitary} to conceive Hamiltonians whose Lie algebras are isomorphic to $\text{su}(2^n)$ in the perspective of dynamical Lie algebra \cite{smith2024optimally}.  We could find another family of Hamiltonians by employing a complete pool of Pauli strings in the adaptive variational quantum eigensolver \cite{tang2021qubit}. Since the set of Pauli strings there are used for preparing arbitrary real ground states, exponentially sized Lie algebras of Hamiltonians would be obtained. Together, one may be able to conceive a variety of Hamiltonians, whose eigen-decomposition is polynomially represented in Pauli strings as \cref{def: QD1}, but the associated Lie algebra is exponentially large, following the idea of extensions of polynomially-sized Lie algebras in \cite{patel2024extension}. For such Hamiltonians, the per-iteration computational cost of \cref{alg: algorithm1} scales polynomially, since the matrices $K^\dagger HK$ and $K^\dagger K$ involved in \eqref{opt2} are represented by polynomially many Paulis. By contrast,  the Lie-diagonalization approaches \cite{gu2021fast,kokcu2022fixed} would require an exponential scaling of per-iteration cost in this case, due to Lie algebra being exponentially large.

\section{Numerical result}\label{sec: numerical test}

In this section, we provide several numerical results to test the proposed algorithms. Our optimization algorithms rely on elementary calculations with Pauli strings, such as commutator calculations, conversion from Pauli strings to quaternary representations and vice versa. For these calculations, we used and extended a part of the code  \cite{kokcu2022fixed} in our optimization algorithms. First, we demonstrate \cref{alg: algorithm1} for several Hamiltonians that are randomly generated in such a way that the unitary matrix $K$ containing the eigenvectors and the diagonal matrix from the eigenvalues are sparse in terms of Pauli basis expansion. 

 \cref{fig:random H} shows the optimization results for four Hamiltonians represented by different numbers of Pauli strings $\text{len}(H)=16, 24, 64, 96$, where we denote $\text{len}(H)$ to be the number of Pauli strings in the Hamiltonian $H$. {To test \cref{alg: algorithm1}, we construct the Hamiltonians with the following steps (which may not belong to the families in \cref{ex: Hams} or \cref{ex: Hams1}):  First, we sample a pair of polynomial numbers of Pauli strings, say, $(\{P_{i_a}\}_{i_a\in I_a},\{P_{i_b}\}_{i_b\in I_b})$, where $\abs{I_a}=\mathcal{O}(\text{poly}(n))$ and $\abs{I_b}=\mathcal{O}(\text{poly}(n))$.  Here, each Pauli string $P_{i_a}$ is diagonal. Second, we sample the same number of pairs of random real numbers $(\{c_{i_a}\}_{i_a\in I_a}, \{c_{i_b}\}_{i_b\in I_b})$. With these, a diagonal matrix is constructed as $D=\sum_{i_a\in I_a}c_{i_a}P_{i_a}$, and a unitary matrix is formulated as the product of Pauli rotations, say,  $U=\prod_{i_b\in I_b}\exp(i c_{i_b}P_{i_b})$. Lastly, with this pair of unitary and diagonal, we construct a Hamiltonian that is represented by $UDU^\dag$. For example, the Hamiltonian comprised of $16$ Pauli strings is represented by a pair of numbers of Pauli strings $(4,2)$ for its eigen-decomposition. Similarly, for the cases where $24,64,96$ Pauli strings are used to express Hamiltonians, the diagonals and the unitaries in their eigen-decompositions are represented by $(4,4)$, $(6,2)$, $(6,4)$ pairs of numbers of Pauli strings, respectively. That is, for the cases $\text{len}(H)=16,24,64,96$, the number of parameters in $K$ in \eqref{opt} is $8,32,8,32$, respectively.  }

 We observe that in all cases, the algorithm \cref{alg: algorithm1} converges rather quickly. To study the convergence in response to \cref{eq: convergence thm},  we monitored the value of $\alpha$ in \cref{eq: kl condition} during the iteration. {On average, the values of $\alpha$ for the cases where $\text{len}(H)=16,24,64$ remain close to $1$. In these cases, the optimization results are shown to converge almost linearly as in \cref{fig:random H}, which is consistent with the theoretical result in \cref{eq: convergence thm}. On the other hand, in the case where $\text{len}(H)=96$, the value of $\alpha$ seems to deviate from $1$ and the convergence becomes sublinear as illustrated in \cref{fig:random H}, again consistent with \cref{eq: convergence thm}. }

To further illustrate the optimization properties of our algorithm, we study four-qubit examples, including an XXZ model and a Hubbard model, and the results are shown in \cref{fig: XXZ HB}. In the numerical tests, four different initializations are set by changing the parameters in the models. Specifically, the XXZ model is given by
\begin{equation}\label{H-XXZ}
    H_{XXZ} = J\sum_{i=1}^{n-1}(X_iX_{i+1}+Y_iY_{i+1})+\Delta \sum_{i=1}^{n-1} Z_iZ_{i+1},
\end{equation}and
for the Hubbard model, the Hamiltonian is expressed as
\begin{equation}\label{H-Hubbard}
    H = -\frac{t}{2}\sum_{\sigma\in\{\uparrow,\downarrow\}}\sum_{j=1}^{n-1}\left(X_{j+1,\sigma}X_{j,\sigma}+Y_{j+1,\sigma}Y_{j,\sigma}\right)+\frac{U}{4}\sum_{j=1}^n\left(-Z_{j,\uparrow}-Z_{j,\downarrow}+Z_{j,\uparrow}Z_{j,\downarrow}\right).
\end{equation}Here we omit the identity matrix, which does not affect diagonalization.

For the XXZ model, we aim to diagonalize the Hamiltonian with parameters $J=1$. For the Hubbard model, we set  $t=1$. {Meanwhile, as shown in \cref{tab: filtering}, we choose several different values for  $\Delta$ or $U$ and run the optimization algorithms. Specifically, with a Hamiltonian defined by a different value of $\Delta$, for instance, we compute its eigen-decomposition using an eigensolver, obtain a unitary $K$, expand $K$ in terms of Pauli strings as \eqref{opt} and use the coefficients as the initial parameters for the optimization \eqref{opt2} with the Hamiltonian defined by the target value of $\Delta$.}

In these experiments, we used \cref{alg: algorithm2} since the Pauli expansion of the unitary of eigenvectors and the diagonal of eigenvalues are not sparse in these cases, and \cref{alg: algorithm1} may not be efficiently implementable. 
As shown in \cref{fig: XXZ HB},  the numerical results suggest the fast convergence of \cref{alg: algorithm2} to diagonalizations of the target Hamiltonians,  especially when the initial parameter is well selected. These findings suggest that the optimized parameter set obtained from \eqref{opt2} for one Hamiltonian is an excellent initialization for our optimization procedure when diagonalizing another Hamiltonian with slightly perturbed physical parameters.

\begin{figure}[htbp]
    
    %\centering

    \begin{center}
   \begin{subfigure}[b]{0.45\textwidth}
    \centering
    \includegraphics[width=\textwidth]{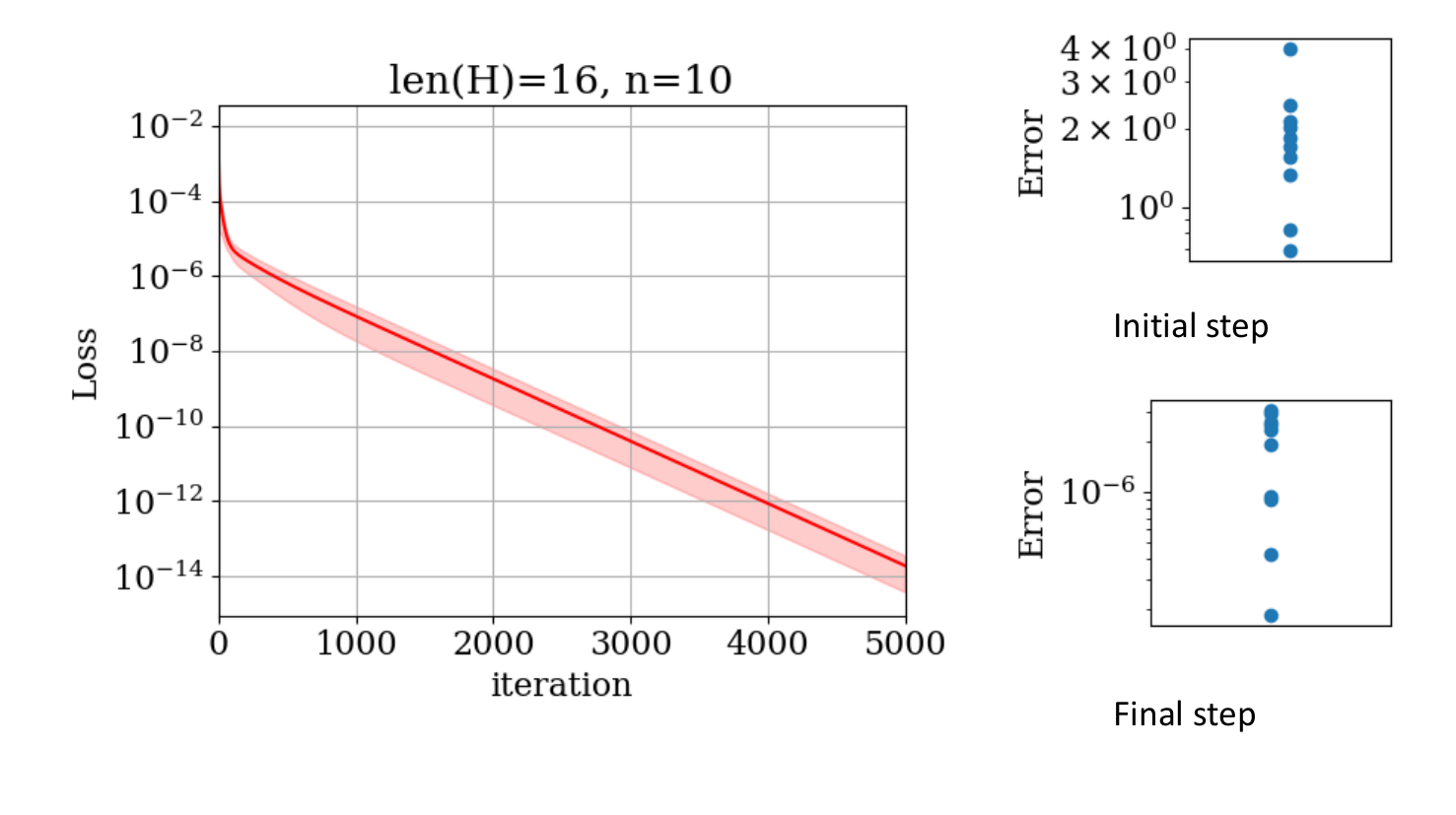}
    \end{subfigure}
    \begin{subfigure}[b]{0.45\textwidth}
    \centering
    \includegraphics[width=\textwidth]{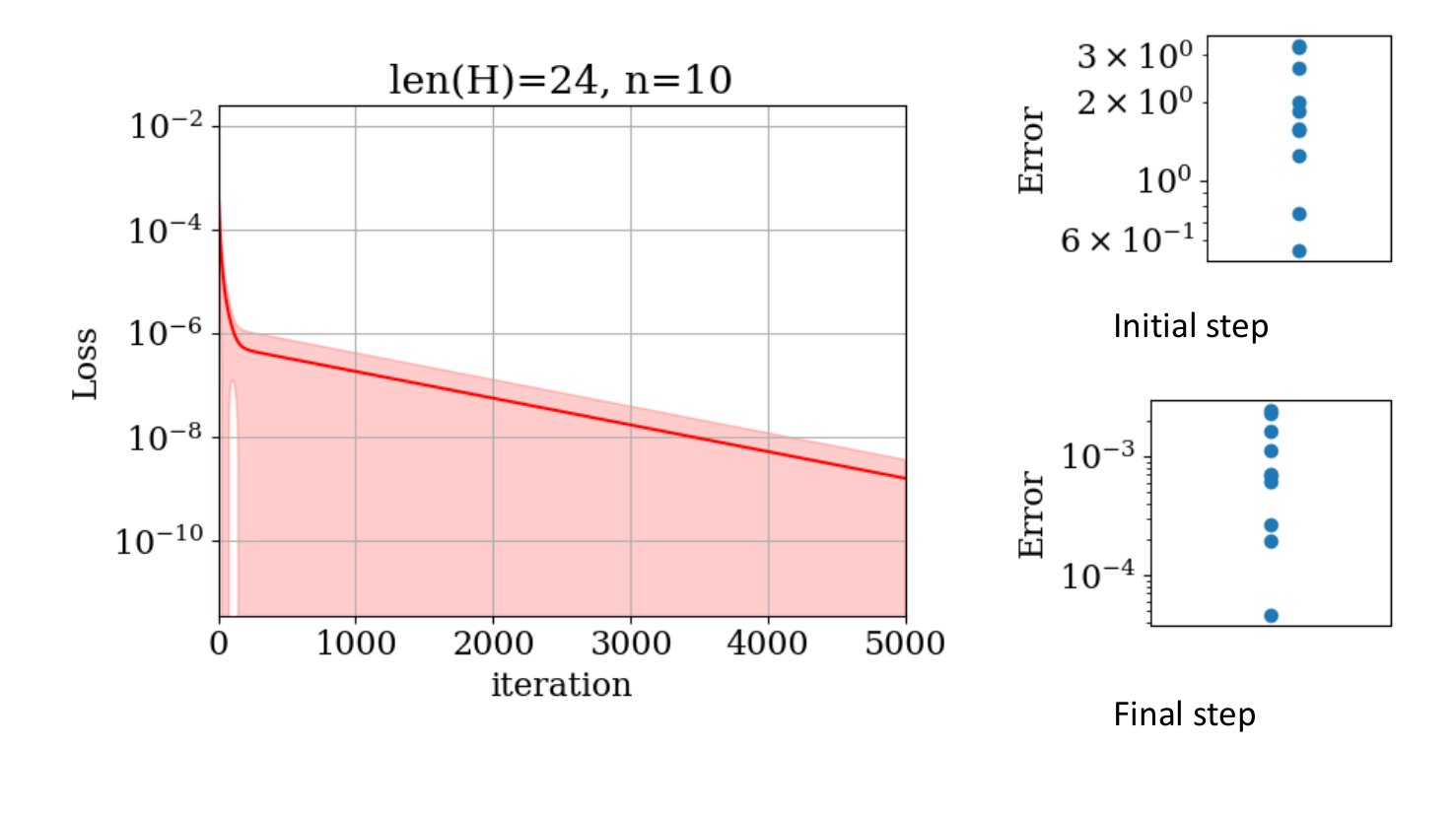}
    \end{subfigure}
    \begin{subfigure}[b]{0.45\textwidth}
    \centering
    \includegraphics[width=\textwidth]{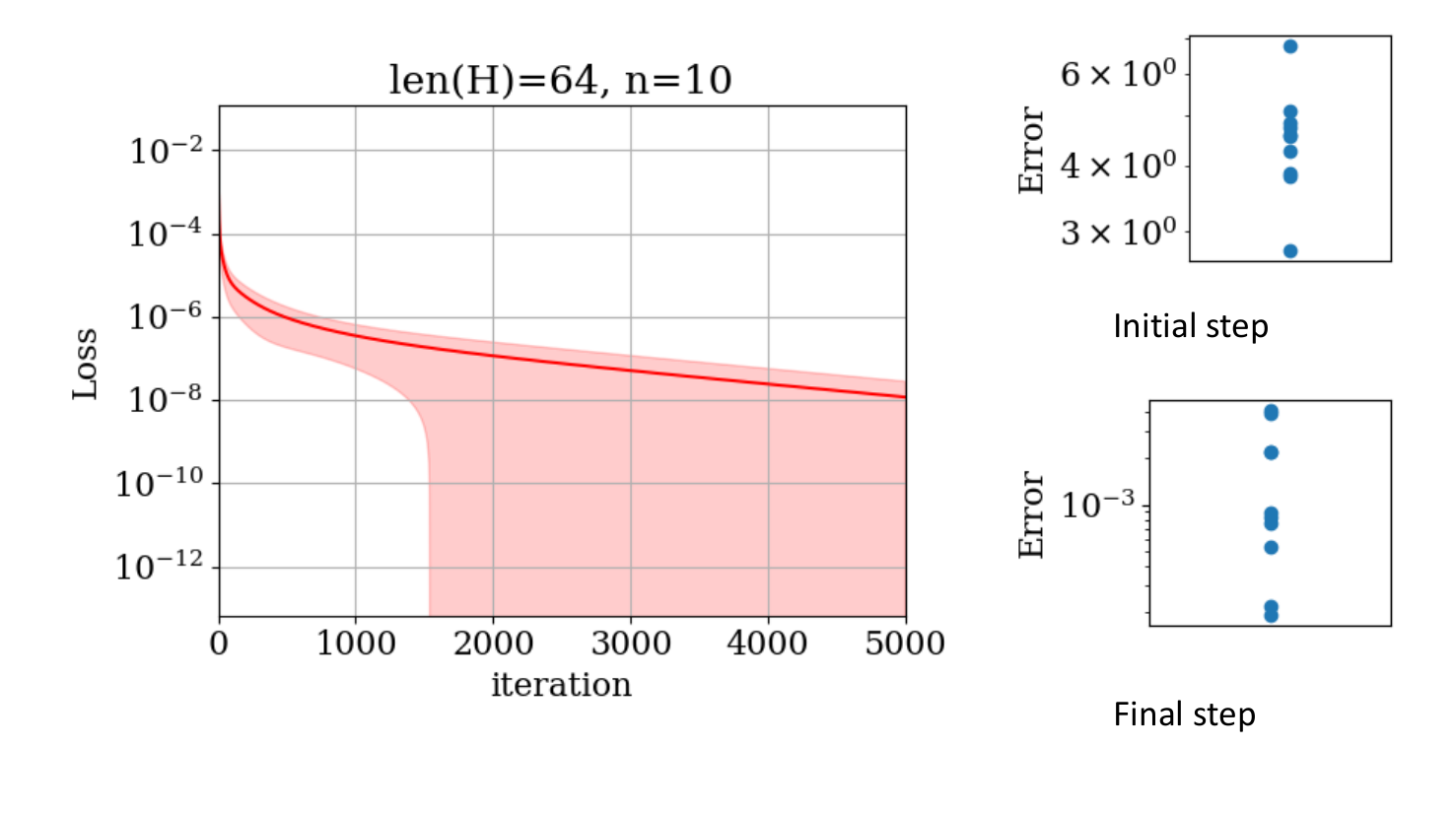}
    \end{subfigure}
    \begin{subfigure}[b]{0.45\textwidth}
    \centering
    \includegraphics[width=\textwidth]{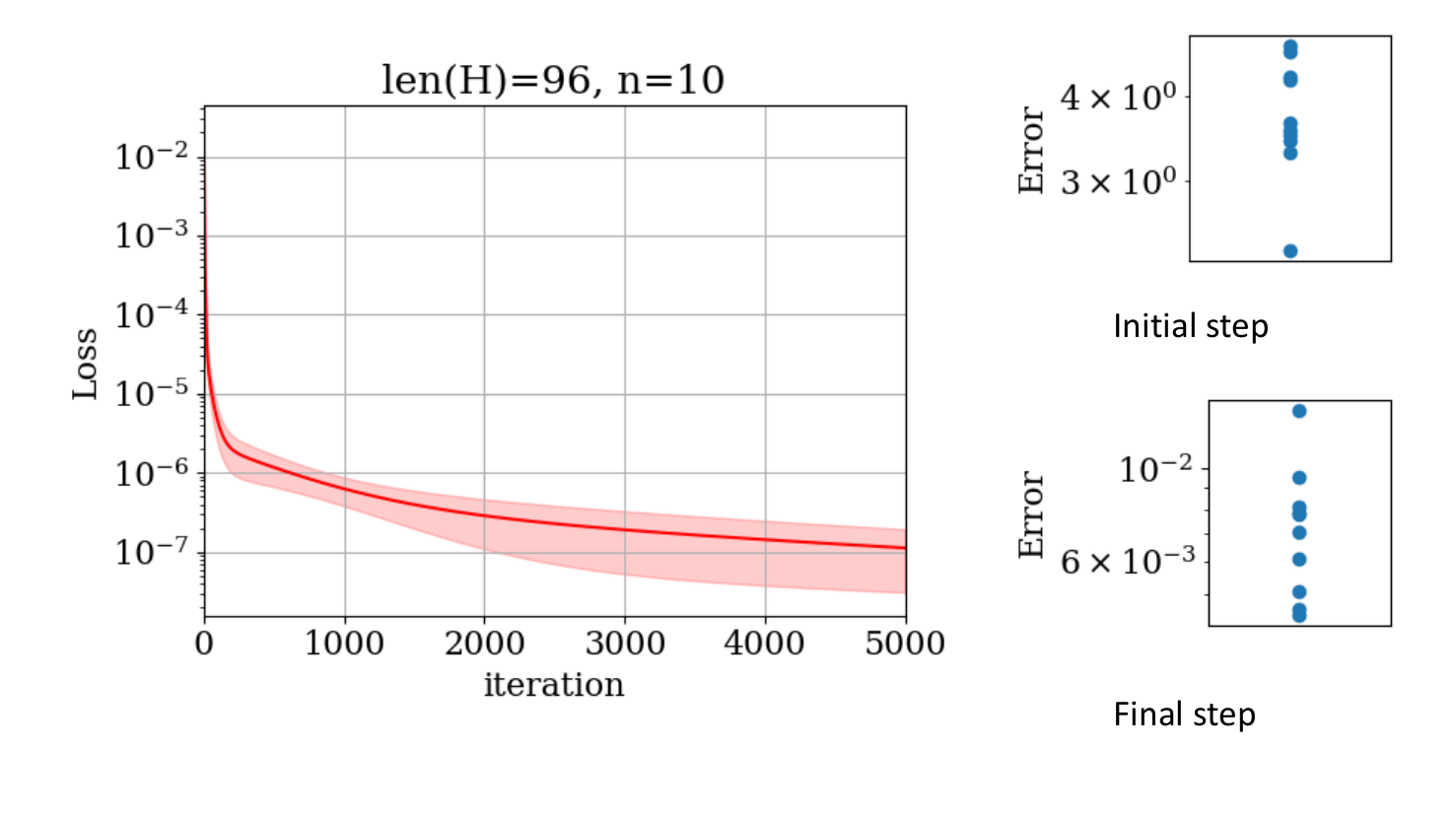}
    \end{subfigure}

    \begin{subfigure}[b]{0.3\textwidth}
    \centering
    \includegraphics[width=\textwidth]{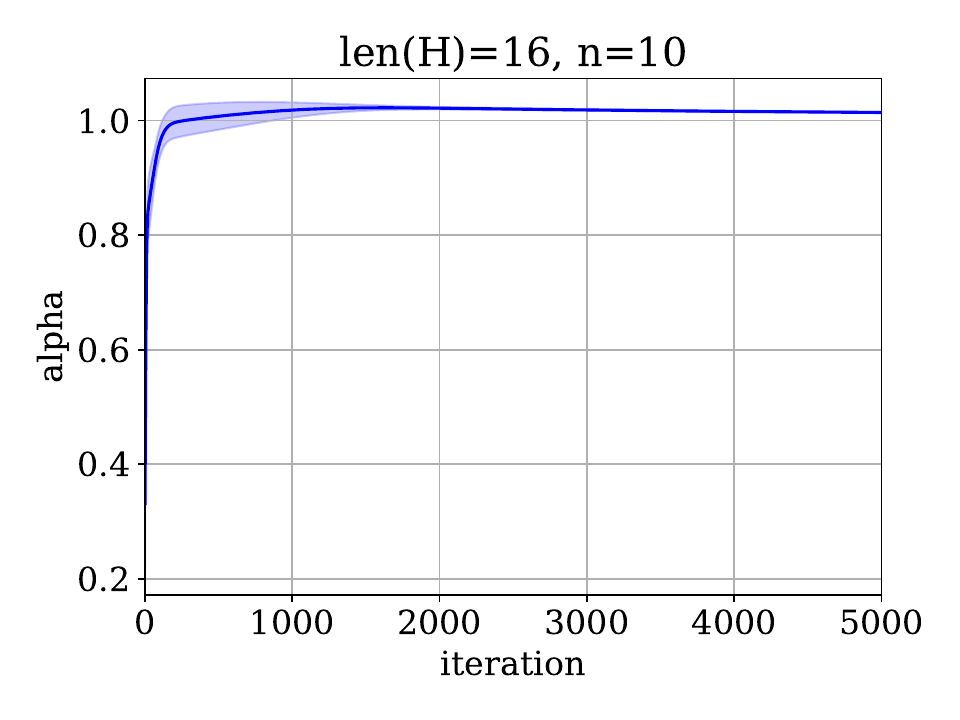}
    \end{subfigure}
    \begin{subfigure}[b]{0.3\textwidth}
    \centering
    \includegraphics[width=\textwidth]{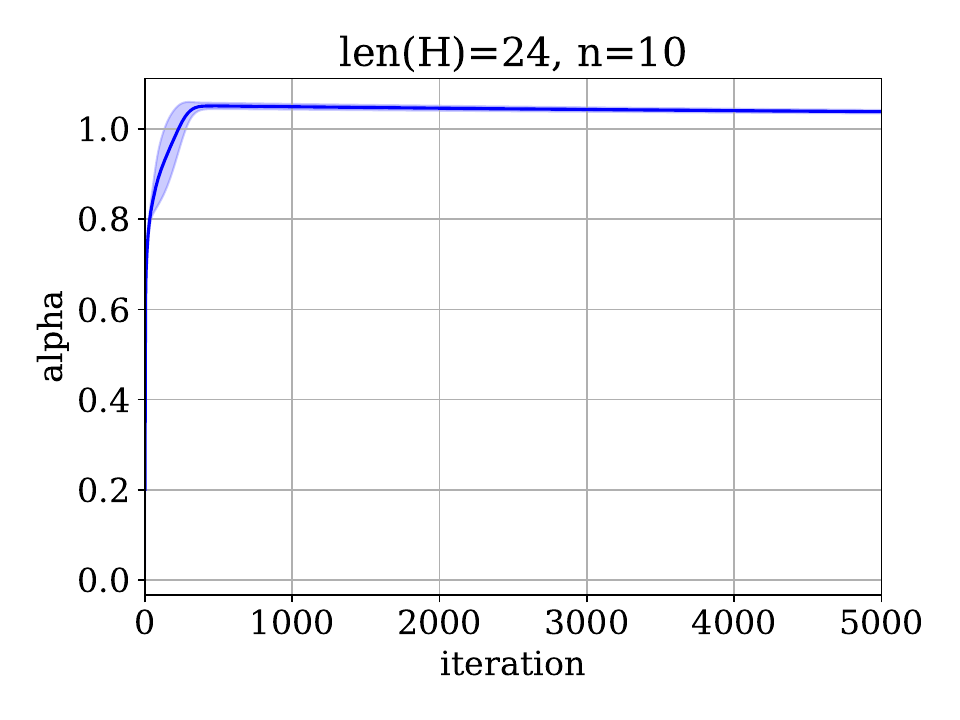}
    \end{subfigure}
    
    \begin{subfigure}[b]{0.3\textwidth}
    \centering
    \includegraphics[width=\textwidth]{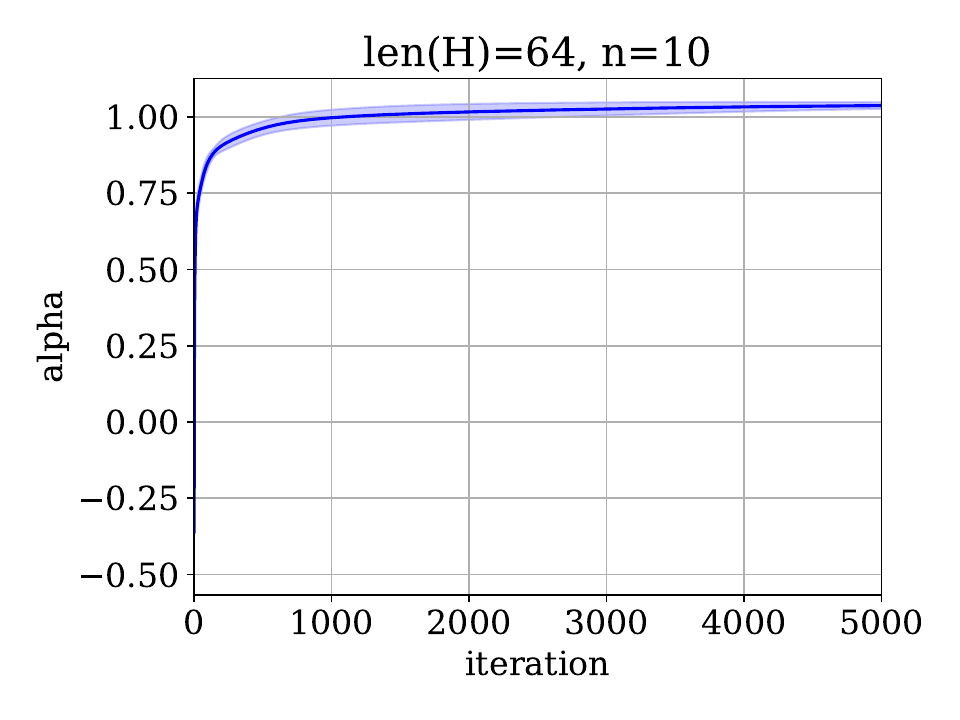}
    \end{subfigure}
    \begin{subfigure}[b]{0.3\textwidth}
    \centering
    \includegraphics[width=\textwidth]{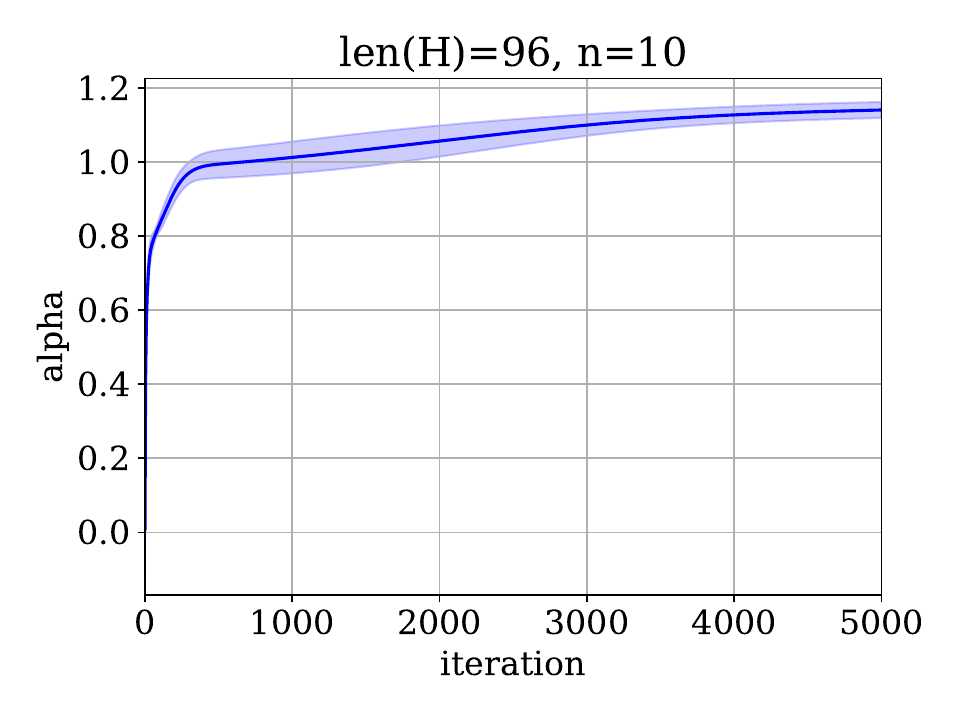}
    \end{subfigure}
\end{center}

    \caption{Optimization results (Top) and the corresponding values of $\alpha$ with $\mu =1$ in the KL condition \eqref{eq: kl condition} (Bottom) for four different Hamiltonians randomly generated comprised of $16$, $24$, $64$ and $96$ Pauli strings for $n$-qubit system ($n=10$). The error in the optimization results indicates the Frobenius error, $\norm{H-\widetilde{H}}_F$, where $H$ and $\widetilde{H}$ denote the original Hamiltonian and an approximation defined in \cref{lem: perturb}. We compare the errors estimated at the first step and the last step of the optimization. }
    \label{fig:random H}
\end{figure}

\begin{figure}[htbp]

    \centering
   \begin{subfigure}[b]{0.45\textwidth}
    \centering
    \includegraphics[width=\textwidth]{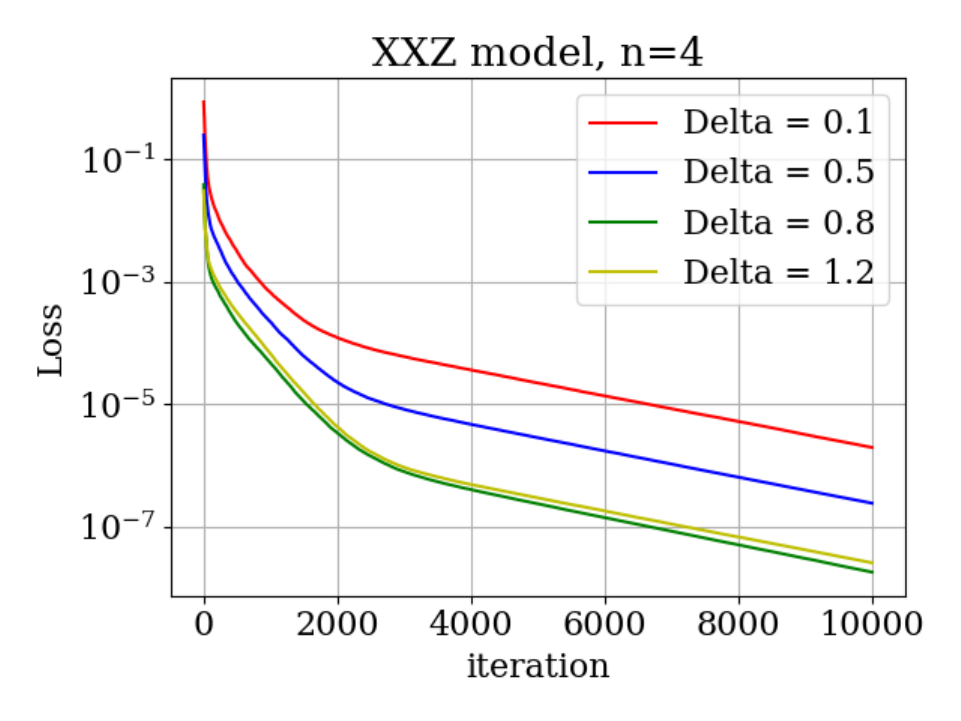}
    \end{subfigure}
    \begin{subfigure}[b]{0.45\textwidth}
    \centering
    \includegraphics[width=\textwidth]{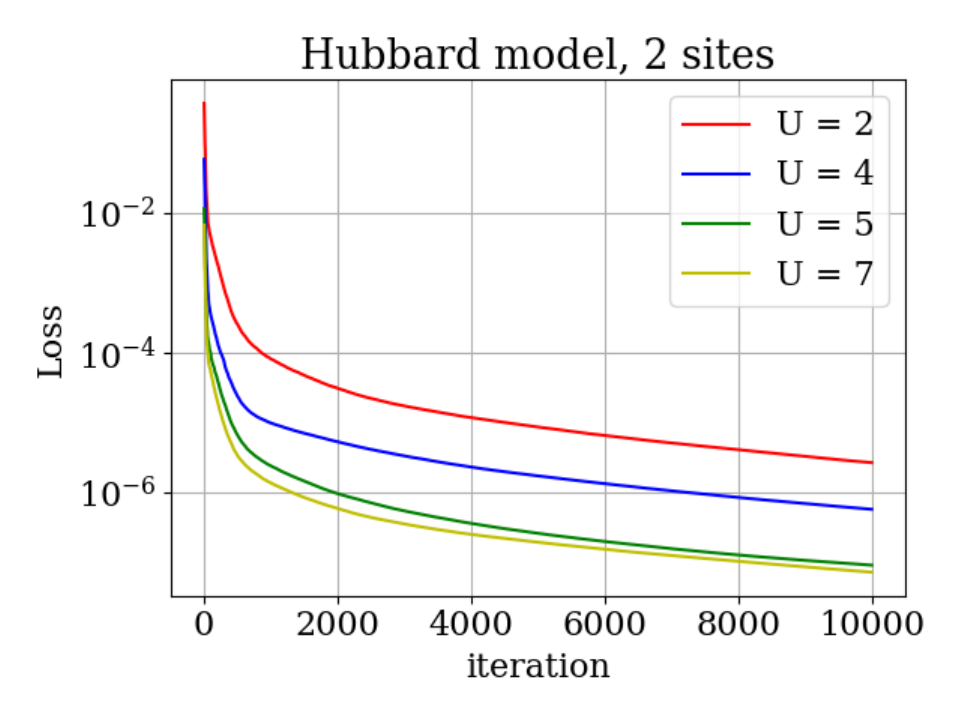}
    \end{subfigure}

    \caption{Optimization results for XXZ model and Hubbard model with four different choices of parameters $\Delta$ or $U$ in \cref{H-XXZ,H-Hubbard} as listed in \cref{tab: filtering}, which give four different initializations.  }
    \label{fig: XXZ HB}
\end{figure}

\begin{table}
    \centering
    \begin{subtable}[h]{0.3\textwidth}
        \centering
         \begin{tabular}{c|c|c}
         $\Delta$ & Initial error & Final error\\
        \hline
         0.1 & 3.870 & 0.028 \\
         \hline
         0.5 & 2.073 & 0.010 \\
         \hline
         0.8 & 0.800 & 0.003 \\
         \hline
         1.2 & 0.765 & 0.003 \\
    \end{tabular}
    \caption{XXZ model}
    \end{subtable}
    \begin{subtable}[h]{0.3\textwidth}
        \centering
         \begin{tabular}{c|c|c}
         $U$ & Initial error & Final error\\
        \hline
         2 & 2.529 & 0.042 \\
         \hline
         4 & 1.000 & 0.020 \\
         \hline
         5 & 0.441 & 0.008 \\
         \hline
         7 & 0.350 & 0.007 \\
    \end{tabular}
    \caption{Hubbard model}
    \end{subtable}
    \caption{Initial and final errors, $\norm{H-\widetilde{H}}_F$, for the optimization results in \cref{fig: XXZ HB}}
    \label{tab: filtering}
\end{table}

\section{Discussion}

In the previous sections, we identified a family of fast-forwarding Hamiltonians (\cref{def: QD1}) and developed an efficient classical algorithm (\cref{alg: algorithm1}). In this section, we examine the significance of our results from various perspectives and discuss related work.

The previous works \cite{cirstoiu2020variational,novo2021quantum} have raised an important open question: is the set of quantum diagonalizable Hamiltonians (\cref{def: QD}) smaller than that of fast-forwarding Hamiltonians? Addressing this question may first require identifying the Hamiltonians that lie within the family of quantum diagonalizable Hamiltonians (\cref{def: QD}).  To date, much of the existing literature has focused on Hamiltonians associated with polynomially sized Lie algebras (PLA), such as quadratic fermionic and certain bosonic Hamiltonians \cite{gu2021fast}, non-contextual Pauli and fermionic mean-field Hamiltonians \cite{patel2024extension}, and other further examples thanks to the existence of a polynomially sized ansatz with Pauli rotations \cite{kokcu2022fixed}. In contrast, the case of Hamiltonians generated by exponentially large Lie algebras has remained largely unexplored in this context, despite the fact that such Hamiltonians encompass a wide range of physically and computationally significant models \cite{wiersema2024classification}. This raises an important question: Is every quantum diagonalizable Hamiltonian necessarily associated with a PLA? From the examples presented in \cref{ex: Hams} and \cref{ex: Hams1}, we provide evidence that the class of quantum-diagonalizable Hamiltonians \emph{strictly} contains the PLA Hamiltonians. As illustrated in \cref{fig:diagram}, this result establishes a previously unrecognized relationship between these two classes. Namely, the former is strictly larger than the latter.  In other words, there remains an opportunity to construct efficient diagonalization circuits for certain fast-forwardable Hamiltonians whose underlying Lie algebras are exponentially large, beyond the scope of conventional PLA-based approaches.

Second, we introduced a classical optimization algorithm (\cref{alg: algorithm1}) tailored to the family of Hamiltonians in \cref{def: QD1}.  While, in principle, existing quantum algorithms such as the variational fast-forwarding (VFF) method \cite{cirstoiu2020variational} and the double-bracket diagonalization algorithm \cite{gluza2024double} are applicable to this family, several limitations hinder their practical use. First, the optimization landscape of VFF may involve local minima and saddle points as other variational quantum algorithms. Although strategies such as adaptive ansatz growth (e.g., adaptive VQE) have been proposed to mitigate these issues, they do not offer guarantees against convergence to suboptimal points in general, and may lead to over-parameterization in practice \cite{grimsley2023adaptive,feniou2023overlap}. Furthermore, implementing the coherent Hamiltonian simulations, which is required for the training, introduces additional overhead and complexity. By contrast, our algorithm optimizes the cost function that is analytically guaranteed to have only global minima as stationary points according to \cref{thm: F}. Specifically, any non-zero global minimum directly yields a valid diagonalization unitary, and our algorithm converges to only a non-zero minimum due to the normalization step enforced at each iteration. As a result, our approach circumvents the suboptimality issues that are inherent in VFF, offering a more direct and reliable route to diagonalization, without any quantum resource.  Another approach by Gluza \cite{gluza2024double} implements a double-bracket dynamics to construct a diagonalizing unitary. This technique is constructive and broadly applicable, due to its non-increasing property of the off-diagonal terms. However, the family of Hamiltonians for which the algorithm operates efficiently remains unclear from a complexity-theory perspective.  Moreover, the required circuit depth increases exponentially with the number of iterations, posing serious challenges to scalability in practical implementations.  Therefore, we think that the proposed classical diagonalization algorithm is more suitable than the two quantum algorithms for the family of Hamiltonians considered. This conclusion aligns with insights from the literature of dequantization algorithms, where certain classical algorithms, under additional assumptions, are shown to perform comparably to their quantum counterparts \cite{tang2019quantum, gharibian2022dequantizing}, and thus reshape expectations around quantum advantage in specific algorithmic contexts. 

\begin{figure}
    \centering
    \includegraphics[width=0.9\linewidth]{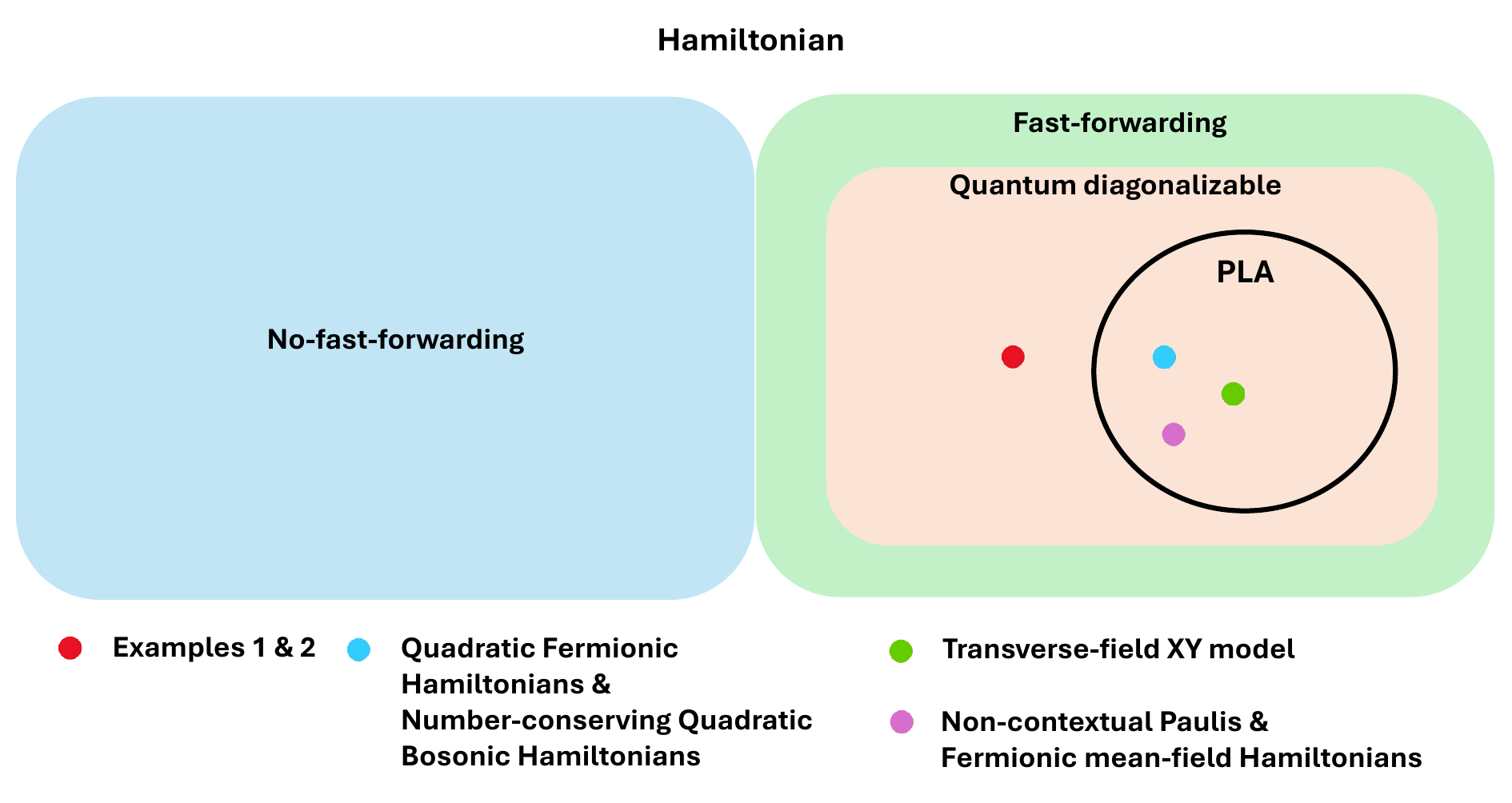}
    \caption{The diagram illustrates examples of quantum diagonalizable Hamiltonians: this work (red), \cite{gu2021fast} (blue), \cite{kokcu2022fixed} (green), and \cite{patel2024extension} (purple). }
    \label{fig:diagram}
\end{figure}

We remark that the formulation \eqref{opt2} is equivalent or similar to previous Lie-diagonalization approaches. For example, Somma's Lie-diagonalization algorithm \cite{somma2019unitary,gu2021fast} is viewed as a variant of Jacobi’s eigenvalue algorithm, iteratively minimizing off-diagonal terms, which is conceptually aligned with \cref{alg: algorithm1}. However, their algorithm is more suitable for PLA Hamiltonians, whereas our algorithm is designed to handle a family of Hamiltonians with exponentially large Lie algebras (e.g. \cref{def: QD1}). On the other hand, the algorithm in \cite{kokcu2022fixed} finds a Cartan decomposition of Hamiltonian, which has been widely studied in the context of quantum control theory \cite{d2007quantum,earp2005constructive,khaneja2001cartan}. Their algorithm assumes prior algebraic knowledge of the Cartan decomposition of a given Hamiltonian and essentially seeks to identify a parameterized unitary $K$ such that
\begin{equation}\label{eq: conditions}
    \begin{split}
        [K^\dagger HK,v]&=0,\quad \text{ for some }v\in\mathfrak{h},\\
        K^\dagger HK&\in\mathfrak{m},
    \end{split}
\end{equation}as deduced from the proof of \cite[Theorem.5, Appendix D]{kokcu2022fixed}. Here, a Pauli string $v$ satisfies that its corresponding exponential $e^{itv}$ for real $t$ is dense in the Lie group, $e^{i\mathfrak{h}}$, associated to a Cartan subalgebra $\mathfrak{h}$. However, the condition in \cref{eq: conditions} can be succinctly expressed as
\begin{equation}
    \mathrm{tr}\left(K^\dagger HKP\right)=0,
\end{equation}for any Pauli string $P\not\in\mathfrak{h}$ with a unitary $K$. Notice that this condition exactly corresponds to a non-zero global minimum of the cost function \eqref{opt2}. Therefore, given the same algebraic knowledge of the Cartan decomposition, our formulation in \eqref{opt2} can be equivalently interpreted as the optimization problem presented in \cite{kokcu2022fixed}, by expressing $K$ and $K^\dagger H K$ within the framework of Cartan decomposition. Nevertheless, we see that these Lie-diagonalization algorithms may be inefficient in general for the family of Hamiltonians in \cref{def: QD1} due to the existence of exponentially-large Lie algebras in \cref{ex: Hams} and \cref{ex: Hams1}.

\section{Conclusion}
In this work, we present classical optimization algorithms for Hamiltonian diagonalization and identify a family of Hamiltonians for which these algorithms remain efficient, while existing algorithms are not. Specifically, the proposed algorithms exhibit only polynomial per-iteration scaling for Hamiltonians whose eigen-decompositions consist of a diagonal matrix and a unitary operator that are both sparse in the Pauli basis, and whose optimization landscapes admit only global minima as stationary points. To this end, we formulate a cost function in terms of Pauli strings, which constitutes the off-diagonal terms of the Hamiltonian with an orthogonality constraint. We establish that any non-zero stationary point of the cost function corresponds to a global minimum, and eventually a diagonalization of the Hamiltonian. To find such a non-zero stationary point, we propose a simple optimization algorithm such that the gradient descent is implemented at each iteration with normalization. We show that this algorithm converges under a reasonable non-convex condition that is validated in numerical experiments. Our analysis also shows the role of a good initialization in enabling fast convergence.

Toward practical implementations, we also introduced a randomized-coordinate variant of our optimization algorithm that delivers a more efficient iteration scheme than its deterministic counterpart. By formulating a tailored update rule, this randomized approach achieves a quadratic scaling per-iteration cost with respect to the Hilbert-space dimension. Another key numerical observation is that a continuation strategy can accelerate convergence: the optimal parameters obtained for one Hamiltonian serve as an excellent initialization when diagonalizing a nearby Hamiltonian whose physical parameters differ only slightly. Consequently, our algorithms are well-suited for successively diagonalizing families of Hamiltonians that vary smoothly in a control parameter and for applications where rapid, “fast-forward” Hamiltonian simulation is crucial. A notable example is Hamiltonian simulation in the interaction picture \cite{low2018hamiltonian}, as well as certain ODE systems \cite{an2022theory}.

\section*{Code availabilty}
Code used for the current study are available at the following GitHub repository:

https://github.com/TKmath/Classical-algorithms-for-diagonalizing-quantum-Hamiltonians

\section*{Acknowledgement}
 This research was supported by Quantum Simulator Development Project for Materials Innovation through the National Research Foundation of Korea (NRF) funded by the Korean
government (Ministry of Science and ICT(MSIT))(RS-2023-NR119931). We used resources of the Center for Advanced Computation at Korea Institute for Advanced Study and the National Energy Research Scientific Computing Center (NERSC), a U.S. Department of Energy Office of Science User Facility operated under Contract No.DE-AC02-05CH11231. SC was supported by a KIAS Individual Grant (CG090601) at Korea Institute for Advanced Study. XL is supported by the NSF Grant DMS-2411120. TK is supported by a KIAS Individual Grant (CG096001) at Korea Institute for Advanced Study. 

\section*{Conflicts of interests}

All authors declare no financial or non-financial competing interest

\section*{Author contributions}
TK, SC, and XL conceptualized the project. Analyses and numerical experiments were led by TK and
discussed with SC, HP and XL. The manuscript was written by TK, and XL.

\appendix
\section{Optimization with only global minima and convergence analysis}
\label[appsec]{sec: Appendix A}
In this section, we present proofs for \cref{thm: F} and \cref{eq: convergence thm}. First, we prove \cref{thm: F}.
\begin{proof}
    Let $(\bm r_c,\bm\theta_c)$ be a non-zero stationary point, that is $\bm r_c\neq 0$. Denote $K_c=K(\bm r_c,\bm\theta_c)$. By \eqref{opt2} and \eqref{stograd}, it follows at the stationary point,
        \begin{equation}\label{eq: real1}
        \begin{split}
        0&=\frac{\partial F}{\partial r_j}\\
        &=\frac{\partial f}{\partial r_j}+2\sum_{P\in G_2}\phi_P(\bm r_c,\bm\theta_c)\frac{\phi_P(\bm r_c,\bm\theta_c)}{\partial r_j}\\
        &= \frac{\partial f}{\partial r_j}+4\sum_{P\in G_2}\phi_P(\bm r_c,\bm\theta_c)c_{j,P}r_{j_P}\exp(i(\theta_j-\theta_{j_P})).
        \end{split}
    \end{equation}Similar to the proof of \cref{thm: f}, by multiplying $r_j$ to \eqref{eq: real1} and summing over $j$'s, we arrive at
    \begin{equation}\label{eq: F=0}
    \begin{split}
        0&= 4f(\bm r_c,\bm \theta_c)+ 4\sum_jr_j\left(\sum_{P\in G_2}\phi_P(\bm r_c,\bm\theta_c)c_{j,P}r_{j_P}\exp(i(\theta_j-\theta_{j_P}))\right)\\
        &=4f(\bm r_c,\bm \theta_c)+ 4\sum_{P\in G_2}\phi_P(\bm r_c,\bm\theta_c)\left(\sum_jc_{j,P}r_jr_{j_P}\exp(i(\theta_j-\theta_{j_P}))\right)\\
        &=4f(\bm r_c,\bm\theta_c)+4\sum_{P\in G_2}\phi_P(\bm r_c,\bm\theta_c)^2=4F(\bm r_c,\bm \theta_c),
    \end{split}
    \end{equation}by definition of $\phi_P$ in \cref{lem: constraint}. Since $F(\bm r_c,\bm \theta_c)=0$, $(\bm r_c,\bm \theta_c)$ is a global minimum of the cost function in \eqref{opt2}.

    Now we prove that $K$ is unitary up to scaling. Noting that $\sum_{P\in G_2}\phi_P(\bm r_c,\bm\theta_c)^2=0$ from \eqref{eq: F=0}, we know from \cref{lem: constraint} that
    \begin{equation}
        K(\bm r_c,\bm\theta_c)^\dagger K(\bm r_c,\bm\theta_c) = \norm{\bm r_c}^2I.
    \end{equation}Additionally, since $f(\bm r_c,\bm\theta_c)=0$, it immediately follows that $K(\bm r_c,\bm\theta_c)^\dagger H K(\bm r_c,\bm\theta_c)$ is a diagonal matrix, which results that
    \begin{equation}
        H = \frac{1}{\norm{\bm r}^4}K(\bm r_c,\bm\theta_c)h(\bm r_c,\bm\theta_c)K(\bm r_c,\bm\theta_c)^\dagger,
    \end{equation}where $h(\bm r_c,\bm\theta_c):=K(\bm r_c,\bm\theta_c)^\dagger HK(\bm r_c,\bm\theta_c)$. This completes the proof.

\end{proof}
This proof shows that the optimization problem we consider has only global minima as stationary points, and especially any non-zero stationary point yields a diagonalization unitary $K$. Moreover, the following proof guarantees that \cref{alg: algorithm1} converges to a non-zero stationary point.
\begin{proof}
Since the function in \eqref{opt2} is a polynomial of degree $4$ with respect to $\bm r$ and $\bm \theta$ appear there as the phase factors that do not affect the magnitude of the function,  we can find a constant $L>0$ such that
\begin{equation}\label{eq: constants GL}
     \max_{\norm{\bm r(\bm x)}\leq 2}\norm{\nabla^2 F(\bm x)}\leq L,
\end{equation}in the neighborhood of radius $2$   defined in the $\bm r$ direction. To proceed, we derive a recursive inequality for \cref{alg: algorithm1}. Suppose that $\norm{\bm r(\bm x_t)}=1$, as \cref{alg: algorithm1} calls for normalization about $\bm r$ at each iteration. 

From the definition of $\bm y_t$ in \cref{alg: algorithm1} and the condition on $\bm x_t$, we notice that
\begin{equation}\label{eq: norm y}
\begin{split}
\norm{\bm r(\bm y_t)}^2 &= \sum_j((\bm r(\bm x_t))_j-a\frac{\partial F}{\partial r_j}(\bm x_t))^2=\norm{\bm r(\bm x_t)}^2-2a\bm r(\bm x_t)\cdot\nabla_r F(\bm x_t) + a^2\norm{\nabla_rF(\bm x_t)}^2\\
& = 1-8aF(\bm x_t) + a^2\norm{\nabla_rF(\bm x_t)}^2
\end{split}
\end{equation}where $\nabla_rF:=(\frac{\partial F}{\partial r_1},...,\frac{\partial F}{\partial r_N})^T$. The last equality follows from \cref{prop: grad-cost}. We derive a lower bound for $\norm{\bm r(\bm y_t)}$. By applying the Cauchy-Schwarz inequality to \cref{prop: grad-cost} in \eqref{eq: norm y}, we have
\begin{equation}\label{eq: norm y lb}
    \norm{\bm r(\bm y_t)}^2\geq 1-8aF(\bm x_t) + 16a^2F(\bm x_t)^2\geq 1-8aF(\bm x_t).
\end{equation}Then for any $a\leq \frac{2}{L}$, we achieve that
\begin{equation}\label{eq: y-x}
\begin{split}
    F(\bm y_{t})&\leq F(\bm x_t) - a\norm{\nabla F(\bm x_t)}^2+\frac{La^2}{2}\norm{\nabla F(\bm x_t)}^2 \\
    &\leq F(\bm x_t) - \frac{a}{2}\norm{\nabla F(\bm x_t)}^2\\
    &\leq (1- 2\mu aF(\bm x_t)^{\alpha-1})F(\bm x_t).
\end{split}
\end{equation}The last inequality holds by the assumption \eqref{eq: kl condition}. By \eqref{eq: norm y lb} and \eqref{eq: y-x}, we observe that 
\begin{equation}\label{eq: reineq}
\begin{split}
    F(\bm x_{t+1})=\frac{1}{\norm{\bm r(\bm y_t)}^4}F(\bm y_t)&\leq \frac{1-2\mu aF(\bm x_t)^{\alpha-1}}{\left(1-8aF(\bm x_t)\right)^2}F(\bm x_t)\\
    &\leq \frac{1-2\mu aF(\bm x_t)^{\alpha-1}}{1-16aF(\bm x_t)}F(\bm x_t)\\
    &\leq \left(1-(2\mu-16F(\bm x_t)^{2-\alpha})aF(\bm x_t)^{\alpha-1}\right)F(\bm x_t)\\
    &\leq \left(1-\mu aF(\bm x_t)^{\alpha-1}\right)F(\bm x_t),
\end{split}
\end{equation}when $F(\bm x_t)\leq \left(\frac{\mu}{16}\right)^{\frac{1}{2-\alpha}}$. To find the number of iterations $T$ to achieve $\epsilon$ precision, we solve the following inequality,
\begin{equation}
    \epsilon\leq \left(1-\mu a\epsilon^{\alpha-1}\right)^TF(\bm x_0).
\end{equation}
Therefore, for a reasonable initial guess $\bm x_0$, \cref{alg: algorithm1} requires the number of iterations scaling as 
\begin{equation}
    T=\mathcal{O}\left(\frac{L}{\mu  \epsilon^{\alpha-1}}\log\frac{F(\bm x_0)}{\epsilon}\right).
\end{equation}

\end{proof}

\section{Sensitivity analysis for numerical solutions}
\label[appsec]{sec: Appendix B}
In this section, we prove \cref{lem: perturb} and  \cref{thm: perturb}. First, we prove \cref{lem: perturb}.
\begin{proof}
For ease of notation in analysis, we omit the parameters $\bm r, \bm \theta$ unless stated otherwise. Recalling the set $G_1$ in \eqref{opt} and denoting $G_3=\{\otimes_{i=1}^n\sigma^{(i)} :\sigma^{(i)}\in\{I,X,Y,Z\} \text{ and }\forall i\in[n],\;  \sigma^{(i)}\in\{I,Z\}\}$, we define 
\begin{equation}\label{eq: h0 Delta}
h_0= \frac{1}{2^n}\sum_{P\in G_3} \tr(K^\dagger H KP)P,\quad \Delta = \frac{1}{2^n}\sum_{P\in G_1} \tr(K^\dagger H KP)P,
\end{equation}which sum to the $K^\dagger HK$, thus verifying \cref{h0Delta}. 
By definition \eqref{opt}, we observe that
\begin{equation}
    \norm{\Delta}_F^2=\frac{1}{2^n}\sum_{P\in G_1} \tr(K^\dagger H KP)^2=\frac{f}{2^n}\leq \frac{F}{2^n},
\end{equation}since $f\leq F$ by \eqref{opt2}. This proves the first statement of \cref{lem: perturb}. For the second statement, by our assumptions, we have from the proof of \cref{lem: constraint} that,
\begin{equation}\label{eq: KK close to I}
    \norm{K^\dagger K-I}_F^2 =\norm{\sum_{P\in G_2}\phi_P(\bm r,\bm \theta)P}_F^2=2^n\sum_{P\in G_2}\phi_P(\bm r,\bm \theta)^2\leq \epsilon\leq \frac{1}{4}.
\end{equation}Note that the inverse of $K$ exists, and 
    \begin{equation}
        \norm{K^{-1}}_2^2=\norm{(K^\dagger K)^{-1}}_2=\norm{(K^\dagger K-I+I)^{-1}}_2\leq \frac{1}{1-\sqrt{\epsilon}}\leq 2.
    \end{equation}From this, we have the error in the spectral norm, 
\begin{equation}
    \begin{split}
        \norm{H-\widetilde{H}}_2&=\norm{(K^\dagger)^{-1}K^\dagger\left(H-\widetilde{H}\right)KK^{-1}}_2\\
        &=\norm{(K^\dagger)^{-1}\left(h_0+\Delta-K^\dagger \widetilde{H}K\right)K^{-1}}_2\\
        &=\norm{(K^\dagger)^{-1}\left(h_0+\Delta-K^\dagger Kh_0K^\dagger K\right)K^{-1}}_2\\
        & \leq 2\norm{h_0+\Delta-K^\dagger Kh_0K^\dagger K}_2.
    \end{split}
    \end{equation} The term in the last line is bounded above as
    \begin{equation}
        \begin{split}
            &\norm{h_0+\Delta -K^\dagger Kh_0K^\dagger K}_F\\
            &=\norm{h_0+\Delta -(K^\dagger K-I+I)h_0(K^\dagger K-I+I)}_F\\
            &=\norm{\Delta -(K^\dagger K-I)h_0(K^\dagger K-I)-(K^\dagger K-I)h_0-h_0(K^\dagger K-I)}_F\\
            &\leq \frac{F}{2^n}+3\norm{h_0}_F\sqrt{\epsilon}.
        \end{split}
    \end{equation}
    
    Finally, noting that
    \begin{equation}
\norm{h_0}_F^2=\frac{1}{2^n}\sum_{p\in\mathfrak{h}}\tr(K^\dagger HKp)^2\leq \frac{1}{2^n}\norm{K^\dagger HK}_F^2\leq \frac{1}{2^n}\norm{K^\dagger K}_F^2\norm{H}_F^2\leq (1+\sqrt{F})^2\norm{H}_F^2,
    \end{equation}with the above results, we obtain 
    \begin{equation}
        \norm{H-\widetilde{H}}_2\leq \frac{F}{2^{n-1}}+6(1+\sqrt{F})\norm{H}_F\sqrt{\epsilon},
    \end{equation}which proves the second statement of \cref{lem: perturb}.

\end{proof}
This completes the proof of \cref{lem: perturb}. Using the result, we prove \cref{thm: perturb} that gives an approximation error bound of diagonalization of Hamiltonian for a numerical solution obtained from our algorithm. 
\begin{proof}From \cref{lem: perturb},  we notice that
$\norm{H-\widetilde{H}}_2\leq \epsilon'$. Next, we refer to the proof of \cite[Lemma 5]{ko2024quantum} from which we can deduce that 
 \begin{equation}
     \bra{\widetilde{\psi}_j}P_{\lambda_k(H)}\ket{\widetilde{\psi}_j}\geq 1-(C+1)\epsilon',
 \end{equation}where $\widetilde{\psi}_j$ denotes any eigenvector of $P_{k,\widetilde{H}}$ and the constant $C$ depends on the eigenvalues of $H$ by considering a fixed polynomial approximation in the Stone-Weierstrass theorem to the indicator function in \cite[eq (38)]{ko2024quantum}.  By summing this inequality over $j$'s, we have
 \begin{equation}
     \text{tr}\left(P_{k,\widetilde{H}}P_{\lambda_k(H)}\right)\geq \text{rank}(P_{\lambda_k(H)})\left(1-(C+1)\epsilon'\right),
 \end{equation}since $\text{rank}(P_{k,\widetilde{H}}) =\text{rank}(P_{\lambda_k(H)})$, and therefore
 \begin{equation}
 \begin{split}
     &\norm{P_{k,\widetilde{H}}-P_{\lambda_k(H)}}_F^2\leq 2\text{rank}(P_{\lambda_k(H)})-2\text{tr}\left(P_{k,\widetilde{H}}P_{\lambda_k(H)}\right)\\
     &\leq 2\text{rank}(P_{\lambda_k(H)})(C+1)\epsilon'\leq 2\max_k\text{rank}(P_{\lambda_k(H)})(C+1)\epsilon'.
 \end{split}
 \end{equation}Therefore,  the statement is proven.
\end{proof}

\section{Quantum diagonalizable Hamiltonians whose Lie algebras are exponentially large}
\label[appsec]{sec: Appendix C}
In this section, we prove that the Lie algebras of Hamiltonians in \cref{ex: Hams} are exponentially sized.
\begin{proof}
We consider the trivial case where $\prod_{m=1}^{\lfloor \log n\rfloor}U_m=I$. If $\mathfrak{g}(H)=\text{su}(2^n)$ in this case, then any Hamiltonian in \cref{ex: Hams} would satisfy the same property.

To proceed, we make use of the result in \cite{smith2024optimally}, which states that the set,
\begin{equation}\label{eq: su-generator}
    \{Z_1, Z_2, X_1, X_2, Z_1Z_2\}\cup\{X_2Y_3...Z_j, Z_2Y_3...X_j|3\leq j\leq n\}, 
\end{equation}generates all Pauli strings in the sense of dynamical Lie algebra. Noticing that the second factor of $U$ consists of mutually-anticommuting Pauli strings, we see that the constraint that $\sum_{j=0}^nc_j^2=1$ and $c_j$'s are real suffices to ensure that $U$ is a unitary, a property of the anticommuting Pauli strings \cite{izmaylov2019unitary,ryabinkin2023efficient}. Now we prove the statement by showing that all Pauli strings in the set \eqref{eq: su-generator} are contained in the Hamiltonian $H$ and its dynamical Lie algebra.

First, we observe that $U$ contains the identity $I$ as its component, since $i\sin\theta Z_2\times c_2Z_2 = ic_2\sin\theta I$ and $c_2,\theta\neq 0$. This implies that $H$ contains the set $\{X_2Y_3..Z_j|3\leq j \leq n\}$, since the elements of Pauli string are chosen from $U\times I\times ic_2\sin\theta I$ in the multiplication $UDU^\dagger$. Similarly, we can find that $Z_1Y_2\times Y_2\times I=Z_1$, $X_1Y_2\times Y_2\times I=X_1$, $Z_2\times I\times I=Z_2$, $Z_2\times Y_2\times I=X_2$, and $Z_1X_2\times Y_2\times I=Z_1Z_2$. Notice that  $Z_1X_2$ is in $U$ due to the multiplication, $i\sin\theta Z_2\times c_1Z_1Y_2$. Lastly, we are left to show that the Pauli strings in $\{Z_2Y_3...X_j|3\leq j\leq n\}$ are contained in $\mathfrak{g}(H)$. We observe that since $I\times Y_j\times I=Y_j$, $Y_j$'s are in $H$. Since  $H$ contains the set $\{X_2Y_3..Z_j|3\leq j \leq n\}$ as shown above, and  $[[X_2Y_3...Z_j,Y_2],Y_j]$ equals $Z_2Y_3...X_j$ up to a scaling factor, it happens that $\mathfrak{g}(H)$ also contains the set $\{Z_2Y_3...X_j|3\leq j\leq n\}$. Therefore, $\mathfrak{g}(H)$ contains the generating set \eqref{eq: su-generator} and all $n$-qubit Pauli strings.

\end{proof}

\bibliographystyle{plain}
\bibliography{ref}

\end{document}